\newcommand{\ssreview}{\textcolor{black}}
\newcommand{\ssminor}{\textcolor{black}}
\newcommand{\ssupdated}{\textcolor{black}}
\newcommand{\ssupdatedd}{\textcolor{black}}
\newtheorem{theorem}{Theorem}
\newtheorem{corollary}[theorem]{Corollary}
\newtheorem{lemma}[theorem]{Lemma}
\newtheorem{proposition}[theorem]{Proposition}
\newtheorem{remark}{Remark}
\newcommand{\abs}[1]{| #1 |}
\newcommand{\norm}[1]{\left\lVert #1 \right\rVert}
\DeclareMathOperator*{\argmin}{arg\,min} 
\DeclareMathOperator*{\argsup}{arg\,sup} 
\DeclareMathOperator*{\arginf}{arg\,inf} 
\DeclarePairedDelimiter\floor{\lfloor}{\rfloor}
\title{\LARGE \bf
Towards Fast and Safety-Guaranteed Trajectory Planning and Tracking for Time-Varying Systems
}
\author{Seth Siriya, Mo Chen, and Ye Pu
\thanks{\ssminor{The work of Seth Siriya was supported by the Australian Government Research Training Program (RTP) Scholarship. The work of Ye Pu was supported by the Australian Research Council under Project DE220101527.}}
\thanks{\ssminor{S. Siriya and Y. Pu are with the Department of Electrical and Electronic Engineering, University of Melbourne, Parkville, VIC 3010, Australia (\{ssiriya@student., ye.pu@\}.unimelb.edu.au; +61 390355511). M. Chen is with the School of Computing Science, Simon Fraser University, Burnaby, BC V5A 1S6, Canada (mochen@cs.sfu.ca; +1 7787827198).}
}
}
\begin{document}

\maketitle
\thispagestyle{empty}
\pagestyle{empty}

\begin{abstract}
When deploying autonomous systems in unknown and changing environments, it is critical that their motion planning and control algorithms are computationally efficient and can be reapplied online in real time, whilst providing theoretical safety guarantees in the presence of disturbances. The satisfaction of these objectives becomes more challenging when considering time-varying dynamics and disturbances, which arise in real-world contexts. We develop methods \ssupdated{with the potential to} address these issues by applying an offline-computed safety guaranteeing controller on a physical system, to track a virtual system \ssupdatedd{that} evolves through a trajectory that is replanned online, \ssupdated{accounting for constraints updated online}. 
The first method we propose is designed for general time-varying systems over a finite horizon. Our second method overcomes the finite horizon restriction for periodic systems.
We simulate our algorithms on a case study of an autonomous underwater vehicle subject to wave disturbances.
\end{abstract}





\section{Introduction}

Robotic systems have demonstrated great success in a wide range of applications, including self-driving cars in urban environments \cite{badue2021self}, unmanned autonomous vehicles (UAVs) for disaster response \cite{adams2011survey}, and autonomous underwater vehicles (AUVs) \cite{paduan2009high} for seafloor mapping. As they become more commonplace and integrated into society, it is pertinent that they can efficiently plan trajectories to safely maneuver through a priori unknown and changing environments in real time. One of the major challenges to further widespread adoption of these technologies are complex, time-varying, system dynamics and disturbances, which hinder the safety of planned trajectories. This motivates the need to develop strategies for safe navigation of nonlinear systems with  time-varying dynamics and bounded disturbances capable of handling constraints updated online in real time.

Various approaches to safe planning and control have been considered in recent years. In \cite{Majumdar2017FunnelPlanning:}, a finite library of open-loop trajectories and feedback controllers are precomputed offline, \ssupdatedd{and} then sum-of-squares (SOS) optimization is used to compute a safe \textit{funnel}. Subsequent SOS-based methods remove the finite library requirement to consider either a parameterized class of, or arbitrary, planned trajectories (e.g. \cite{smith2019continuous,singh2018robust,kousik2020bridging,yin2020optimization,schweidel2022safe}). Another method is control barrier functions (CBFs) \cite{Xu2015RobustnessControl}, which provide inequality constraints on control inputs that guarantee the system remains in a safe set. CBFs for \ssupdatedd{low-level} control have been combined with model predictive control (MPC) for \ssupdatedd{high-level} planning in \cite{rosolia2020multi}. Control contraction metrics for \ssupdatedd{low-level} control have been applied to track a nominal reference trajectory computed using MPC, with guarantees that the system remains in a tube around the nominal reference \cite{singh2021safe}, bearing similarities to tube-based MPC \cite{Mayne2011Tube-basedControl}. \ssreview{These results often conservatively \ssupdatedd{consider} worst-case additive disturbances over the state space, motivating \cite{dhar2023disturbance} which proposes a planning strategy that accounts for disturbances parameterized over the state space.} However, \ssreview{all of these} methods \ssupdatedd{have} their own \ssupdatedd{respective} challenges, such as the need for polynomial dynamics in SOS-based methods, the difficulty of searching for CBFs for nonlinear systems, and the online computation of nominal reference trajectories in nonlinear MPC. Moreover, they do not consider time-varying dynamics.

FaSTrack \cite{Herbert2018FaSTrack:Planning, Chen2021FaSTrack:Tracking} is another method that follows a similar strategy to address the challenge of safe navigation of nonlinear systems with online constraint satisfaction and bounded disturbances. It achieves this through the offline precomputation of a safe, optimal tracking controller and associated tracking error bound (TEB) using Hamilton Jacobi (HJ) reachability.
This controller is applied online to track a simple \textit{planning system} evolving through a trajectory computed via an arbitrary computationally efficient planning method, \ssreview{enabling real-time performance}. However, time-varying dynamics are not considered. A \ssupdated{naive} extension to safely handle time-varying dynamics is to uniformly bound the time-varying components, but this approach may be too conservative. Moreover, the issue of when or how to replan to effectively navigate through diverse environments is not thoroughly explored. Different speeds for the planning system are considered in \cite{Fridovich-Keil2018PlanningPlanning} to address this, but more expensive offline computation is required. 

Motivated by these challenges, we aim to develop a method for \ssupdatedd{the} safe navigation of \textit{time-varying} nonlinear systems that \ssupdatedd{satisfy} constraints updated online in real time, despite the presence of bounded disturbances. Our approach is inspired by FaSTrack and similarly consists of an offline computation stage and an online planning and tracking stage. We started this line of work in \cite{Siriya2020Safety-GuaranteedWaves}, which proposed a method for real-time safe trajectory planning of AUVs in plane progressive waves --- a time-varying system. Although we will use AUVs as a motivating example, unlike \cite{Siriya2020Safety-GuaranteedWaves}, in this paper we address the combined replanning and tracking problem for general time-varying systems. In particular, our contributions are as follows:

\begin{enumerate}[leftmargin=*]

\item We propose a trajectory replanning and tracking method for \ssupdatedd{the} safe navigation of \textit{general time-varying} dynamical systems to a goal while satisfying constraints \ssupdatedd{that} are updated online, over a \textit{restricted} time horizon.
\ssupdated{Our strategy shows promise for real-time performance, which is enabled via the precomputation of} a value function and safe, optimal tracking controller offline based on a time-varying model of the dynamics, over this horizon.
We derive theoretical guarantees for online constraint satisfaction and \ssupdatedd{goal-reaching}.
\item Our procedure in 1) allows the planning system to be freely initialized in a variably sized set based on the value function during replanning. This allows the \textit{planning system} \ssreview{trajectory to exhibit discontinuous jumps during replanning, or } `teleport', enabling greater flexibility in performance across diverse environments.
\item We propose a trajectory replanning and tracking method for \textit{periodic} systems that overcomes the time horizon restriction in 1) to achieve constraint satisfaction and goal reaching for \textit{arbitrarily} long horizons despite the HJ precomputation being performed over a short time interval. We theoretically \ssupdatedd{analyze} the method.
\item We demonstrate our methods using a numerical example of an AUV in plane progressive waves. Firstly, we show that our time-varying procedure in 1) successfully achieves obstacle avoidance and goal reaching in a tight, \textit{a priori} unknown environment by updating known obstacle information in real time using sensor data. This is contrasted against an alternate time-invariant procedure based on FaSTrack \cite{Chen2021FaSTrack:Tracking}. Secondly, we demonstrate that flexible reinitialization of the planning system can be exploited for adaptive \ssupdatedd{behavior}. Lastly, we demonstrate the success of the procedure from 3) in a lengthy simulation.
\end{enumerate}


The remainder of the paper is organized as follows: \ssminor{Sec.} \ref{sec:problem-method-overview} \ssminor{contains} our problem formulation. In \ssminor{Sec.} \ref{sec:offline-computation}, the offline stage of our framework is provided. In \ssminor{Sec.} \ref{sec:planning}, the planning procedure for the online stage is detailed. The replanning and tracking framework for time-varying systems is provided in \ssminor{Sec.} \ref{sec:replanning}, alongside a theoretical analysis of its safety guarantees. This is repeated in \ssminor{Sec.} \ref{sec:replanning-periodic} for the case of periodic systems. 
\ssreview{In \ssminor{Sec.}~\ref{sec:auv-example}, we provide an example AUV system model subject to wave disturbances and show how our framework can be implemented for it.}
Conclusions are provided in \ssminor{Sec.} \ref{sec:conclusion}. \ssreview{Proofs of all theoretical results are deferred to Appendix~\ref{sec:proofs}.}

\section{Problem Formulation} \label{sec:problem-method-overview}
In \ssreview{this section}, we describe the problem of navigating a time-varying system to a goal while satisfying constraints updated online. 




Consider a dynamical system --- which we henceforth refer to as the \textit{tracking system} --- governed by the dynamics:
\begin{equation} 
    \dot{s} = f(t,s,u_s,d) \label{eqn:time-varying-tracking-model}
\end{equation}
where $f:[0,T] \times \mathcal{S} \times \mathcal{U}_s \times \mathcal{D} \rightarrow \mathbb{R}^{n_s}$ is the \textit{tracking system model}, $t$ is time, $[0,T]$ is the time interval where $f$ is defined, $s \in \mathcal{S}\subseteq \mathbb{R}^{n_s}$ is the \textit{tracking system state}, $u_s$ is the \textit{tracking system control} taking values in a compact set $\mathcal{U}_s \subseteq \mathbb{R}^{n_u}$, and $d$ is the \textit{disturbance} taking values in a compact set $\mathcal{D} \subseteq \mathbb{R}^{n_d}$. 

Our objective is to control the tracking system starting from some initial state $s_0$ at time $t = 0$, so it satisfies control constraint $\mathcal{U}_s$, remains inside a state constraint set $\mathcal{C} \subseteq \mathcal{S}$ which is updated online in real time and reaches a desired goal region $\mathcal{G} \subseteq \mathbb{R}^{n_s}$, over the interval $t \in [0,T_{\textnormal{run}}]$, \ssreview{where $T_{\textnormal{run}} \leq T$ is the \textit{task horizon} over which the system is controlled}. \ssreview{For example, $\mathcal{U}_s$ for an AUV could be its actuator limits, $\mathcal{C}$ may contain velocity constraints and the complement of the obstacles to be avoided in an environment, and $\mathcal{G}$ could be its desired position and velocity (see Sec.~\ref{sec:auv-example} for a specific example).} This is to be achieved regardless of the disturbance $d \in \mathcal{D}$ affecting the system.

\ssreview{Throughout this paper, we} assume $f$ is uniformly continuous, bounded, and Lipschitz continuous in $s$ uniformly in $t$, $u_s$, and $d$. Therefore, given control inputs $u_s(\cdot) \in \{ \phi : [t,T] \rightarrow \mathcal{U}_s : \phi(\cdot) \text{ is measurable}\}$, disturbances $d(\cdot) \in \{ \phi : [t,T] \rightarrow \mathcal{D} : \phi(\cdot) \text{ is measurable}\}$, and initial state $s \in \mathcal{S}$ at time $t \in [0,T]$, there exists a unique continuous trajectory that solves \eqref{eqn:time-varying-tracking-model} \cite{coddington1955theory}. We denote this trajectory as $\xi_f(\cdot;s,t,u_s(\cdot),d(\cdot)) : [t,T] \rightarrow \mathcal{S}$. It satisfies \eqref{eqn:time-varying-tracking-model} with an initial condition almost everywhere, such that $\frac{d}{dt'} \xi_f(t';s,t,u_s(\cdot),d(\cdot))= f(t',\xi_f(t';s,t,u_s(\cdot),d(\cdot)),u_s(t'),d(t'))$ for $t' \in [t,T]$, with $\xi_f(t;s,t,u_s(\cdot),d(\cdot)) = s$.


\section{Offline Computation} \label{sec:offline-computation}

In this section, we describe the offline stage of our method for trajectory replanning and tracking for time-varying systems. We start by describing the planning and relative systems in \ssminor{Sec.} \ref{sec:planning-relative-system}. Following this, in \ssminor{Sec.} \ref{sec:hj-game} we formally introduce the value function and optimal tracking controller, and their computation. We also provide the sublevel set invariance guarantee associated with the controller and value function, then summarize the offline stage in Alg. \ref{alg:offline-stage}.

\subsection{The Planning and Relative System} \label{sec:planning-relative-system}

\subsubsection{Planning system} \label{sec:planning-model}

The \textit{planning system} is governed by the dynamics:
\begin{align}
    \dot{p} = h(p,u_p), \quad p \in \mathcal{P}, \quad u_p \in \mathcal{U}_p, \label{eqn:time-varying-planning-model}
\end{align}
where $h: \mathcal{P} \times \mathcal{U}_p \rightarrow \mathbb{R}^{n_p}$ is the \textit{planning system model}, $p \in \mathcal{P} \subseteq \mathbb{R}^{n_p}$ is the \textit{planning system state}, $u_p \in \mathcal{U}_p \subseteq \mathbb{R}^{n_{u_p}}$ is the \textit{planning system control}, and $\mathcal{U}_p$ is a compact set representing the planning system control constraint. We assume $h$ is uniformly continuous, bounded, and Lipschitz continuous in $p$ uniformly in $u_p$. Next, define the set of measurable functions $\mathbb{U}_p(t) := \{ \phi : [t,T_{\textnormal{off}}] \rightarrow \mathcal{U}_p \ : \ \phi(\cdot) \textnormal{ is measurable}\}$. 
Here, $T_{\textnormal{off}} \in [0,T]$ denotes the \textit{offline computation horizon}, which is the finite horizon over which the offline computation is to be performed. Then, given planning system controls $u_p(\cdot) \in \mathbb{U}_p(t)$ and initial state $p \in \mathcal{P}$ at time $t \in [0,T_{\textnormal{off}}]$, there exists a unique continuous trajectory that solves \eqref{eqn:time-varying-planning-model} \cite{coddington1955theory}.
We denote this trajectory as $\xi_h(\cdot;p,t,u_p(\cdot)):[t,T_{\textnormal{off}}] \rightarrow \mathcal{P}$. It satisfies \eqref{eqn:time-varying-planning-model} with an initial condition almost everywhere, such that $\frac{d}{dt'} \xi_h(t';p,t,u_p(\cdot)) = h(\xi_h(t';p,t,u_p(\cdot)),u_p(t'))$ for $t' \in [t,T_{\textnormal{off}}]$, with $\xi_h(t;p,t,u_p(\cdot))=p$.

\subsubsection{Relative system} \label{sec:relative-model}

The \textit{relative system model} is defined as
\begin{equation}
    g(t,r,u_s,u_p,d) := Lf(t,s,u_s,d) - M h(p,u_p) \label{eqn:time-varying-relative-model}
\end{equation}
for $t \in [0,T_{\textnormal{off}}]$, $u_s \in \mathcal{U}_s$, $u_p \in \mathcal{U}_p$, $d \in \mathcal{D}$ and relative state $r = Ls-Mp$.
Here, $L$ and $M$ are specially chosen matrices that respectively embed the tracking system and planning system states into the relative system state space {$\mathcal{R} \subseteq \mathbb{R}^{n_r}$, where $n_r$ is the dimensionality of the space.} 
\ssreview{Note that these systems do not need to evolve in the same space, allowing for the choice of simple planning models. However, care} 
{needs to be taken when selecting $M \in \mathbb{R}^{n_r \times n_p}$ and $L \in \mathbb{R}^{n_r \times n_s}$ so that 1) common coordinates for the planning and tracking system are matched, forming the \textit{tracking error}, and 2) important information in the system dynamics is not lost.} The latter is not considered in FaSTrack \cite{Chen2021FaSTrack:Tracking}, but its importance will become apparent \ssupdatedd{in} our example in \ref{sec:case-study-offline-computation}. Now, consider the ODE
\begin{align}
    \dot{r}=g(t,r,u_s,u_p,d). \label{eqn:relative-system}
\end{align}
Define the sets of measurable functions $\mathbb{U}_s(t) := \{ \phi : [t,T_{\textnormal{off}}] \rightarrow \mathcal{U}_s : \phi(\cdot) \textnormal{ is measurable}\}$, and $\mathbb{D}(t) := \{ \phi : [t,T_{\textnormal{off}}] \rightarrow \mathcal{D} : \phi(\cdot) \textnormal{ is measurable}\}$. Under the assumptions imposed on \eqref{eqn:time-varying-tracking-model} and \eqref{eqn:time-varying-planning-model}, given $u_s(\cdot) \in \mathbb{U}_s(t)$, $u_p(\cdot) \in \mathbb{U}_p(t)$, $d(\cdot) \in \mathbb{D}(t)$, and initial relative state $r \in \mathcal{R}$ at time $t \in [0,T_{\textnormal{off}}]$, there exists a unique continuous trajectory that solves \eqref{eqn:relative-system} \cite{coddington1955theory}. 
We denote this trajectory as $\xi_g(\cdot;r,t,u_s(\cdot),u_p(\cdot),d(\cdot)) : [t,T_{\textnormal{off}}] \rightarrow \mathcal{R}$. It satisfies \eqref{eqn:relative-system} with an initial condition almost everywhere:
\begin{align}
    &\frac{d}{dt'} \xi_g(t';r,t,u_s(\cdot),u_p(\cdot),d(\cdot)) = \\
    &\ \ g(t',\xi_g(t';r,t,u_s(\cdot),u_p(\cdot),d(\cdot)),u_s(t'),u_p(t'),d(t')), \ t' \in [t, T_{\textnormal{off}}],\\
    &\xi_g(t;r,t,u_s(\cdot),u_p(\cdot),d(\cdot))=r.
\end{align}
\subsection{The Value Function and Optimal Tracking Controller} \label{sec:hj-game}
We begin by describing the differential game, which is used to define the value function, and subsequently the optimal tracking controller. 
Firstly, we define the error function $l(r)=\norm{Cr}_2$ over the relative state space, where $C$ is a matrix that selects the coordinates of the relative system we wish to \ssupdatedd{penalize} so that $l(r)$ represents the tracking error between the tracking and planning system.

Next, recall that in our relative system, the tracking/planning system controls and disturbances all affect the system. We consider each of these as separate players influencing the outcome of the game. Specifically, it is a pursuit-evasion game \cite{Mitchell2005AGames}, where the tracking system takes actions to minimize the maximum tracking error over the remainder of the game, and the planning system and disturbances try to maximize it. \ssreview{Intuitively, this means the tracking and planning systems take the optimal action to pursue/avoid each other respectively.}

{The strategy the planning system employs is a mapping $\gamma_p : \mathbb{U}_s(t) \rightarrow \mathbb{U}_p(t)$, such that the planning system control is chosen based on the tracking control, and restricted to non-anticipative strategies $\gamma_p \in \Gamma_p(t)$, defined in \cite{Mitchell2005AGames,Fisac2015Reach-avoidConstraints}. Informally, this means that at each time, the planning system reacts to all tracking system controls so far, but not future controls. The disturbance strategy is similarly defined as $\gamma_d:\mathbb{U}_s(t) \rightarrow \mathbb{D}(t)$ with $\gamma_d \in \Gamma_d(t)$.}

We now define the \textit{value function} as the highest cost attained when the tracking system, planning system, and disturbances behave optimally over the interval $t \in [0,T_{\textnormal{off}}]$:
\begin{align} 
    V(r,t) = \ssupdated{\sup_{\substack{\gamma_p\in\Gamma_p(t) \\ \gamma_d\in\Gamma_d(t)}} \inf_{u_s(\cdot)\in\mathbb{U}_s(t)} \nu  (r,t,u_s(\cdot),\gamma_p[u_s](\cdot),\gamma_d[u_s](\cdot))}, \label{eqn:value-function}
\end{align}
\ssupdated{where $\nu(r,t,u_s,u_p,d):=\max_{\tau \in [t,T_{\textnormal{off}}]} l\left( \xi_g \left( \tau;r,t,u_s,u_p,d  \right) \right)$.}
Unlike FaSTrack \cite{Chen2021FaSTrack:Tracking}, the relative system \eqref{eqn:relative-system} corresponding to $\xi_g$ in \eqref{eqn:value-function} is time-varying. Thus, we adopt the forward-time convention from \cite{Fisac2015Reach-avoidConstraints} for the value function to avoid confusion when discussing time-varying systems. It is obtained by solving the Hamilton-Jacobi-Isaacs (HJI) variational inequality
\begin{align}
    \max \Bigg\{ &\frac{\partial V(r,t)}{\partial t} + \min_{u_s\in\mathcal{U}_s} \max_{u_p \in \mathcal{U}_p, d \in \mathcal{D}} \Bigg[ \nonumber \frac{\partial V(r,t)}{\partial r}^{\top} g(t,u_s,u_p,d) \Bigg], \\
    &l(r) - V(r,t) \Bigg\}=0,
    \label{eqn:pde}
\end{align}
over $t\in[0, T_{\textnormal{off}}]$ with $ V(r,T_{\textnormal{off}})=l(r)$. The solution can be numerically approximated using existing tools in HJ reachability (e.g. the helperOC toolbox \cite{hjreachability} or the OptimizedDP library \cite{bui2022optimizeddp}).

The \textit{optimal tracking controller} is obtained as the optimizer of the Hamiltonian as follows:
\begin{equation}
    u_s^*(r,t) \in \argmin_{u_s \in \mathcal{U}_s} \max_{u_p \in \mathcal{U}_d, \ d \in \mathcal{D}} \frac{\partial V(r,t)^{\top}}{\partial r} g(t,r,u_s,u_p,d). \label{eqn:optimal-tracking-controller}
\end{equation}
By applying this controller, the trajectory of the relative system is non-increasing in the levels of the value function, \ssupdated{as} long as the planning system controls and disturbances take values in $\mathcal{U}_p$ and $\mathcal{D}$. To see this, we provide Prop. \ref{prop:remain-in-teb}. 

\begin{proposition} \label{prop:remain-in-teb} (Sublevel Set Invariance) 
Let $s$ and $p$ be the states of the tracking system \eqref{eqn:time-varying-tracking-model} and planning system \eqref{eqn:time-varying-planning-model} at time $t \in [0,T_{\textnormal{off}}]$, and $r = Ls-Mp$ be the corresponding relative state. For all $t' \in [t,T_{\textnormal{off}}]$,
\begin{align}
    V(\xi_g^*(t';r,t), t') \leq V(r,t), \label{eqn:V-non-increasing}
\end{align}
where
\begin{align}
    & \xi_g^*(t';r,t) := \xi_g(t'; r, t, u_s^*(\cdot; r,t), u_p^*(\cdot;r,t),d^*(\cdot;r,t)) \label{eqn:optimal-relative-system-trajectory}
\end{align}
and $u_p^*(\cdot ; r,t) \in \mathbb{U}_p(t)$, $d^*(\cdot;r,t) \in \mathbb{D}(t)$, $u_s^*(\cdot;r,t) \in \mathbb{U}_s(t)$, $\gamma_p^*[\cdot;r,t] \in \Gamma_p(t)$, $\gamma_d^*[\cdot;r,t] \in \Gamma_d(t)$ are defined as:
\begin{align}
    &u_p^*(t';r,t)=\gamma_p^*[u_s^*(\cdot;r,t);r,t](t'), \label{eqn:worst-planning-controls} \\
    &d^*(t';r,t)=\gamma_d^*[u_s^*(\cdot;r,t);r,t](t'),  \label{eqn:worst-disturbances} \\
    &u_s^*(\cdot;r,t) \in \arginf_{u_s(\cdot)\in\mathbb{U}_s(t)} \ssupdated{\nu(r,t,u_s(\cdot),\gamma_p^*[u_s;r,t](\cdot),\gamma_d^*[u_s;r,t](\cdot))}, \label{eqn:optimal-tracking-controls} \\ 
    &\gamma_p^*[\cdot;r,t] \in \argsup_{\gamma_p \in \Gamma_p(t)} \inf_{u_s(\cdot)\in\mathbb{U}_s(t)}  \ssupdated{\nu(r,t,u_s(\cdot),\gamma_p[u_s](\cdot),\gamma_d^*[u_s;r,t](\cdot))} , \nonumber \\
    &\gamma_d^*[\cdot;r,t] \in \argsup_{\gamma_d \in \Gamma_d(t)} \sup_{\gamma_p \in \Gamma_p(t)} \inf_{u_s(\cdot) \in \mathbb{U}_s(t)} \\ & \qquad \qquad \quad \ssupdated{\nu(r,t,u_s(\cdot),\gamma_p[u_s](\cdot),\gamma_d[u_s](\cdot))} . \nonumber
\end{align}
\end{proposition}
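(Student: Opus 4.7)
The plan is to derive the inequality from a dynamic programming principle (DPP) satisfied by the max-over-time value function $V$. Specifically, for any $t' \in [t, T_{\textnormal{off}}]$, I would first establish that
\begin{equation*}
V(r,t) = \sup_{\gamma_p \in \Gamma_p(t),\, \gamma_d \in \Gamma_d(t)} \inf_{u_s(\cdot) \in \mathbb{U}_s(t)} \max\Big\{ \max_{\tau \in [t, t']} l(\xi_g(\tau)),\; V(\xi_g(t'), t') \Big\},
\end{equation*}
where $\xi_g(\cdot) = \xi_g(\cdot; r, t, u_s(\cdot), \gamma_p[u_s](\cdot), \gamma_d[u_s](\cdot))$. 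This identity is standard for max-over-time value functions in differential games (see, e.g., the reach-avoid framework of Fisac et al.\ already referenced in the paper) and is proved by splitting the maximum defining $\nu$ over $[t, T_{\textnormal{off}}]$ into contributions on $[t, t']$ and on $[t', T_{\textnormal{off}}]$, then identifying the latter with $V$ at the intermediate state via the semigroup property of the trajectory $\xi_g$.

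Granted the DPP, I would evaluate its right-hand side at the optimizers $(u_s^*(\cdot; r, t), \gamma_p^*[\cdot; r, t], \gamma_d^*[\cdot; r, t])$ supplied by \eqref{eqn:worst-planning-controls}--\eqref{eqn:optimal-tracking-controls}. Since this triple attains the sup-inf defining $V(r,t)$ through $\nu$, it also attains the sup-inf in the DPP expression. Consequently,
\begin{equation*}
V(r,t) = \max\Big\{ \max_{\tau \in [t, t']} l(\xi_g^*(\tau; r, t)),\; V(\xi_g^*(t'; r, t), t') \Big\} \geq V(\xi_g^*(t'; r, t), t'),
\end{equation*}
which is the desired conclusion.

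The main technical obstacle is the rigorous proof of the DPP, since it requires concatenating non-anticipative strategies across the splitting time $t'$ while preserving measurability with respect to the tracking control input. My approach would be to define each concatenated planning and disturbance strategy to equal the optimal pair on $[t, t']$ and, on $[t', T_{\textnormal{off}}]$, to equal the optimal strategies for the subgame started from the intermediate state $\xi_g(t'; r, t, u_s, \gamma_p[u_s], \gamma_d[u_s])$, invoking a measurable selection in the subgame's state to ensure well-definedness. Non-anticipativity of the concatenation then reduces to non-anticipativity of each piece, which follows directly from the definition. Once the DPP is in hand, the remainder of the argument is purely algebraic.
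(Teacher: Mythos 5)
Your proposal is correct and follows essentially the same route as the paper: both split the running maximum defining $\nu$ at the intermediate time $t'$ and identify the tail cost with $V(\xi_g^*(t';r,t),t')$ via the principle of optimality, the paper doing so as a terse chain of equalities and you routing it through an explicit dynamic programming principle. The one step you assert rather than prove --- that the optimizers of $\nu$ also attain the sup-inf in the DPP expression, whose second component $V(\xi_g(t'),t')$ is exactly the quantity being bounded --- is precisely the time-consistency claim the paper's chain of equalities also takes for granted, so your treatment is, if anything, more explicit about where the real technical work (strategy concatenation and measurable selection) lies.
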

Here, $\gamma_p^*[\cdot;r,t]$ and $\gamma_d^*[\cdot;r,t]$ denote the worst-case non-anticipative strategies for the planning system controls and disturbance respectively, $u_s^*(\cdot;r,t)$ denotes the \textit{optimal tracking system controls} assuming that the planning system controls and disturbances are chosen according to those worst-case strategies, and $u_p^*(\cdot;r,t)$ and $d^*(\cdot;r,t)$ denote the corresponding \textit{worst-case planning system controls and disturbances}. Moreover, $\xi_g^*(\cdot;r,t):[t,T_{\textnormal{off}}] \rightarrow \mathcal{R}$ denotes the \textit{optimal relative system trajectory} when the optimal tracking system controls and worst-case planning system controls and disturbance are applied to the relative system \eqref{eqn:time-varying-relative-model}. \ssreview{These worst-case considerations are usually conservative since the planning system is typically not always adversarially escaping the tracking system. However, they guarantee the optimal tracking controller \eqref{eqn:optimal-tracking-controller} achieves sublevel set invariance in $V(r,t)$ (see Prop.~\ref{prop:remain-in-teb}) with arbitrary planning system controls, which we elaborate upon in Remark~\ref{remark:practical-conditions}. This decouples the two so the former can be pre-computed offline and the latter chosen online, which is key to enabling real-time performance since solving \eqref{eqn:pde} for $V(r,t)$ is computationally expensive. Although not our main focus, existing works (see \cite{Fridovich-Keil2018PlanningPlanning}, \cite{sahraeekhanghah2021pa}) \ssupdated{reduce} the conservatism associated with worst-case planning systems. This issue is somewhat orthogonal to our results since their techniques can be applied on top of what we propose.}

\begin{remark} \label{remark:practical-conditions}
Although Prop. \ref{prop:remain-in-teb} requires $u_s^*$ \eqref{eqn:optimal-tracking-controls} for the tracking system controls and $u_p^*$ \eqref{eqn:worst-planning-controls}, $d^*$ \eqref{eqn:worst-disturbances} for the planning system controls and the disturbances, sublevel set invariance should still hold when the optimal tracking controller \eqref{eqn:optimal-tracking-controller} is applied and the planning system controls and disturbances arbitrarily take values in $\mathcal{U}_p$ and $\mathcal{D}$. There are two reasons for this. Firstly, the controls generated by the optimal tracking controller \eqref{eqn:optimal-tracking-controller} with the worst-case disturbances and planning system controls should be equivalent to \eqref{eqn:optimal-tracking-controls}. This is true for \ssupdatedd{single-player} Bolza problems with differentiable value functions \cite{bertsekas2012dynamic}, but should also usually hold in our \ssupdatedd{two-player} minimum cost problem setting with an almost everywhere differentiable value function. Secondly, compared to the worst-case planning system controls $u_p^*$ and disturbances $d^*$, it should be ``easier'' for the optimal tracking controller \eqref{eqn:optimal-tracking-controller} to oppose arbitrary planning system controls and disturbances taking values in $\mathcal{U}_p$ and $\mathcal{D}$ and hence achieve sublevel set invariance.
\end{remark}

\begin{remark} \label{remark:numerical-error}
\ssreview{The guarantee in \ssminor{Prop.}~\ref{prop:remain-in-teb} requires that the optimal tracking controller $u_s^*$ is exactly obtained, which relies on exactly obtaining $V$. In practice, this may not be achieved when numerically solving \eqref{eqn:pde} over a grid (although the numerical PDE solution converges to the exact solution as the grid spacing goes to zero). In this paper, we assume that $V$ is exactly obtained when providing guarantees, and leave consideration of the effects of errors to future work.}
\end{remark}

The overall steps required for the offline computation are summarized in Alg. \ref{alg:offline-stage}. The most computationally expensive procedure is line 3 since dynamic programming suffers from the curse of dimensionality. 
\ssreview{However, recent techniques have shown promise for solving higher dimensional HJ PDEs, such as the decomposition of systems into smaller subsystems \cite{chen2018decomposition}, and the implementation of high-performance solvers \cite{bui2022optimizeddp} --- which was used for the examples in this work --- although this remains an active field of research.}
After successfully computing $V(r,t)$ and $u_s^*(r,t)$, one can apply them online in real time, which will be useful in \ssminor{Sec.} \ref{sec:replanning} and \ref{sec:replanning-periodic}.
\begin{algorithm}
\caption{OfflineStage}
\begin{algorithmic}[1]
    \STATE \textbf{Input:} $L \in \mathbb{R}^{n_r \times n_s}$, $M \in \mathbb{R}^{n_r \times n_p}$, $C$, $f(t,s,u_s,d) \textnormal{ for } t \in [0,T_{\textnormal{off}}]$, $ h(p,u_p)$, $T_{\textnormal{off}} \in [0,T]$
    \STATE Formulate $g(t,r,u_s,u_p,d), t \in [0,T_{\textnormal{off}}]$ using \eqref{eqn:time-varying-relative-model}
    \STATE Compute $V(r,t), t \in [0,T_{\textnormal{off}}]$ and $\frac{\partial}{\partial r} V(r,t), t \in [0,T_{\textnormal{off}}]$ using dynamic programming in \ssminor{Sec.} \ref{sec:hj-game}
    \STATE Construct the optimal safe controller $u_s^*(r,t)$ from \eqref{eqn:optimal-tracking-controller}
    \STATE \textbf{Output:} $V(r,t)$, $ u_s^*(r,t)$ for $ t \in [0,T_{\textnormal{off}}]$
\end{algorithmic}    \label{alg:offline-stage}
\end{algorithm}

\section{Trajectory Planning for Time-Varying Systems} \label{sec:planning}

In this section, we describe the online planning component of our method. We first introduce essential mathematical objects in \ssminor{Sec.} \ref{sec:sets}, before describing our algorithm in \ssminor{Sec.} \ref{sec:planning-bounded-interval}. Subsequently, in \ssminor{Sec.} \ref{sec:planning-analysis}, we theoretically \ssupdatedd{analyze} the properties of our planning method with respect to our objectives of constraint satisfaction and goal-reaching when combined with the optimal tracking controller \eqref{eqn:optimal-tracking-controller}. 
Throughout this section, we assume that the state constraints $\mathcal{C} \subseteq \mathcal{S}$ are static (i.e. they do not change over time).

\subsection{The Planning System Sublevel Set, Set Satisfaction Map, and Set Avoidance Map} \label{sec:sets}

Recall that the key idea from Prop. \ref{prop:remain-in-teb} is that the optimal relative system trajectory \eqref{eqn:optimal-relative-system-trajectory} is non-increasing in the levels of $V$ over time \eqref{eqn:V-non-increasing}. Given the tracking system state $s$ at time $t$, we can find the set of planning system states $p$ such that the relative system is within the sublevel set corresponding to a particular \textit{value level} $c \in \mathbb{R}$ of $V$. This is called the \textit{planning system sublevel set} $\mathcal{T}_p$, and is defined as
\begin{align}
    \mathcal{T}_p(s,t;c) := \{ p : V(Ls-Mp,t) \leq c \} \subseteq \mathcal{P}, \label{eqn:planning-tube}
\end{align}
where $t \in [0,T_{\text{off}}]$. For any given $s$ and $t$, the set {$\mathcal{T}_p(s,t;c)$} is nonempty if $c \geq \underbar{V}(s,t)$, where $\underbar{V}:\mathcal{S}\times[0,T_{\text{off}}]\rightarrow \mathbb{R}$ denotes the \textit{minimum value level function}:
\begin{align}
    \underbar{V}(s,t) := \min_p V(Ls-Mp,t).  \label{eqn:minimum-level-value-function}
\end{align}

Next, let us define the \textit{set satisfaction map} as
\begin{equation}
\mathcal{F}_B(t;c) := \bigcap_{s \in B^{\mathsf{c}}} \mathcal{T}_p^{\mathsf{c}}(s,t;c). \label{eqn:set-satisfaction-map}
\end{equation}
given set $B \subseteq \mathcal{S}$ and $t \in [0,T_{\text{off}}]$ (visualized in Fig. \ref{fig:set-relation}), where $\mathsf{c}$ denotes set complement. It represents the set of planning system states \ssupdatedd{that} guarantee the tracking system state is inside the set $B$, supposing the relative system is in the sublevel set of $V$ corresponding to $c$. It is useful for translating constraints and goal requirements in the tracking system space $\mathcal{S}$ to the planning system space $\mathcal{P}$. This can be seen in Fig. \ref{fig:trajectory-set} for the case where $n_s=n_p$, such that when the planning system is inside $\mathcal{F}_B$, the tracking system is guaranteed to be inside set $B$. Note that $n_s=n_p$ does not need to hold in general.

A related object is the \textit{set avoidance map}, defined as $\mathcal{A}_B(t;c):=\bigcup_{s\in B} \mathcal{T}_p(s,t;c)$ (visualized in Fig. \ref{fig:set-relation}). It is the set of planning system states which, when avoided by the planning system, ensures the tracking system avoids $B$, shown in Fig. \ref{fig:trajectory-set}. This map is useful when it is more natural to think of constraint satisfaction as avoiding an undesirable set $B$ (e.g. obstacles). It can be easily seen that $p \not \in \mathcal{A}_B(t;c)$ is equivalent to $p \in \mathcal{F}_{B^{\mathsf{c}}}(t;c)$, ensuring the tracking system is in $B^{\mathsf{c}}$.

\begin{figure}
     \centering
     \begin{subfigure}[b]{0.22\textwidth}
         \centering
         \includegraphics[width=0.45\textwidth]{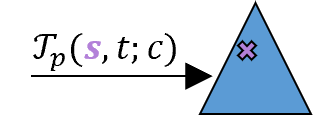}
         \caption{Planning system sublevel set $\mathcal{T}_p$ (\ssreview{blue triangle}) corresponding to a particular tracking system state $s$ (\ssreview{purple `x'}).}
         \label{fig:planning-tube-visual}
         \includegraphics[width=1.0\textwidth]{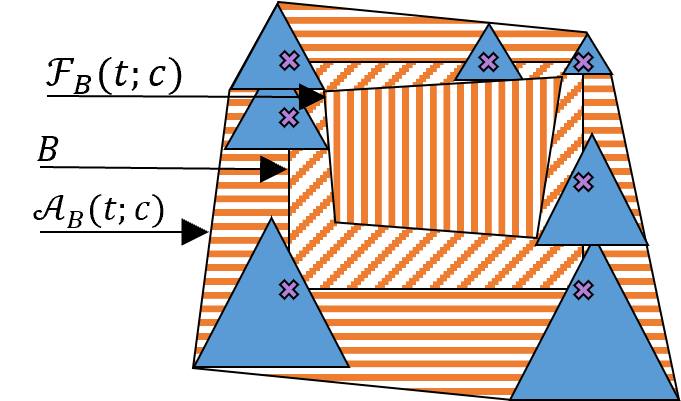}
         \caption{Relationship between $B$ (\ssreview{orange diagonal}), $\mathcal{A}_B$ (\ssreview{orange horizontal}), $\mathcal{F}_B$ (\ssreview{orange vertical}), and $\mathcal{T}_p$ at various tracking system states $s$.}
         \label{fig:set-relation}
     \end{subfigure}
     \hfill
     \begin{subfigure}[b]{0.22\textwidth}
         \centering
         \includegraphics[width=0.9\textwidth]{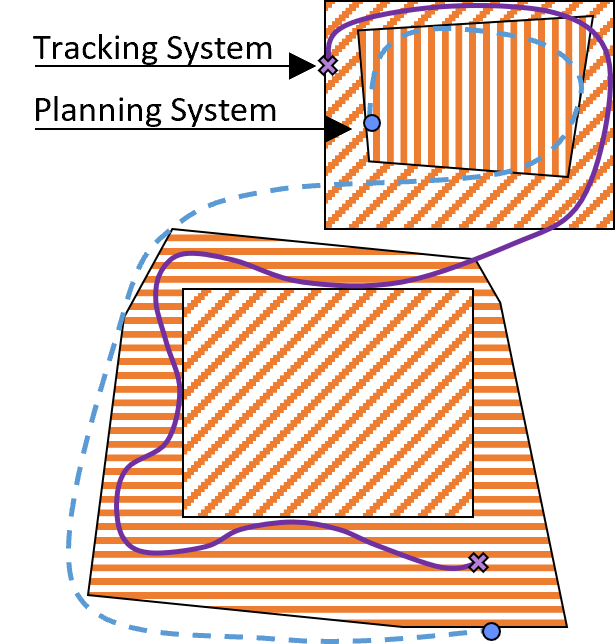}
        \caption{Trajectories showing that when the planning system \ssreview{(blue dashed line)} avoids $\mathcal{A}_B$, the tracking system \ssreview{(purple line)} avoids $B$, and when the planning system is inside $\mathcal{F}_B$, the tracking system is inside $B$.}
         \label{fig:trajectory-set}
     \end{subfigure}
        \caption{\ssminor{Illustrations showing how $\mathcal{T}_p$ is used to construct $\mathcal{F}_B$ and $\mathcal{A}_B$, and their importance for the planning system and tracking system trajectories, when $n_s = n_p$.}}
        \label{fig:set-picture}
\end{figure}

\subsection{Trajectory Planning Method Over a Bounded Interval}
\label{sec:planning-bounded-interval}

We describe our trajectory planning method for time-varying systems over any bounded time interval $[t_{\text{i}},t_{\text{f}}] \subseteq [0,T_{\text{off}}]$, which is summarized in Alg. \ref{alg:planning-bounded-interval}. The inputs to the algorithm include the value function $V(r,t), t \in [0,T_{\text{off}}]$ --- which is obtained from OfflineStage (Alg. \ref{alg:offline-stage}) --- the planning system model \eqref{eqn:time-varying-planning-model}, the goal $\mathcal{G} \subseteq \mathcal{S}$, the planning start time $t_{\text{i}} \in [0,T_{\text{off}}]$, and the planning end time $t_{\text{f}} \in [t_{\text{i}},T_{\text{off}}]$. The static constraints $\mathcal{C} \subseteq \mathcal{S}$ are also supplied as an input. 
Moreover, the initial planning state is an input with $p \in \mathcal{T}_p(s,t_{\text{i}};c)$ enforced, since it is required for establishing safety guarantees. Its importance will be discussed in \ssminor{Sec.} \ref{sec:planning-analysis}. The final input is a chosen value level $c$ satisfying $c \geq \underbar{V}(s,t)$, where $s\in \mathcal{S}$ is the tracking system state at time $t_{\text{i}}$. Although enforcing $c=\underbar{V}(s,t)$ is sufficient for $\mathcal{T}_p(s,t;c)\neq \emptyset$, allowing $c$ to be flexibly chosen enables \ssupdatedd{the} adaptive performance of the controlled system. This will become apparent in \ssminor{Sec.} \ref{sec:replanning}.

Firstly, using $\mathcal{F}_B(t;c)$ \eqref{eqn:set-satisfaction-map}, we compute the  \textit{planning system constraints} $\mathcal{C}_p$ from $\mathcal{C}$, and the \textit{planning system goal} $\mathcal{G}_p$ from $\mathcal{G}$, over the interval $[t_{\text{i}},t_{\text{f}}]$ for the particular level $c$ we have chosen.
Following this, we plan a trajectory $\hat{\xi}_h(t)$ over the interval $t \in [t_{\text{i}},t_{\text{f}}]$, where $\hat{u}_p(\cdot) \in \mathbb{U}_p(t_{\text{i}})$ are the controls for the planned trajectory. The planned trajectory must satisfy Conditions C1-C3 and optionally C4.

\begin{algorithm}[H]
\caption{PlanTrajectoryOverInterval}
\begin{algorithmic}[1]
    \STATE \textbf{Input}: $V(r,t) \text{ for } t \in [0,T_{\text{off}}]$, $ h(p,u_p)$ for $t_{\text{i}} \in [0,T_{\text{off}}]$, $s \in \mathcal{S}$,  $t_{\text{f}} \in [t_{\text{i}},T_{\text{off}}]$, $\mathcal{C} \subseteq \mathcal{S}$, $\mathcal{G} \subseteq \mathcal{S}$, $c$ satisfying $c \geq \underbar{V}(s,t)$, $p$ satisfying $p \in \mathcal{T}_p(s,t_{\text{i}};c)$
    \STATE Compute $\mathcal{C}_p(t) \leftarrow \mathcal{F}_{\mathcal{C}}(t;c)$ for $t \in [t_{\text{i}},t_{\text{f}}]$ from \eqref{eqn:set-satisfaction-map}
    \STATE Compute $\mathcal{G}_p(t) \leftarrow \mathcal{F}_{\mathcal{G}}(t;c)$ for $ t \in [t_{\text{i}},t_{\text{f}}]$ from \eqref{eqn:set-satisfaction-map}
    \STATE Plan trajectory $\hat{\xi}_h(t), t\in[t_{\text{i}},t_{\text{f}}]$ based on $h$ satisfying C1-C3, and optionally C4 (see Remark \ref{remark:conditions})
    \STATE \textbf{Output:} $\hat{\xi}_h(t) \text{ for } t \in [t_{\text{i}},t_{\text{f}}]$
\end{algorithmic}    \label{alg:planning-bounded-interval}
\end{algorithm}

\begin{enumerate}[label=C\arabic*.]
    \item $\hat{u}_p(\cdot) \in \mathbb{U}_p(t_{\text{i}})$ (Planning system control constraint satisfaction)
    \item $\hat{\xi}_h(t) = \xi_h(t;p,t,\hat{u}_p(\cdot))$ (Planning system dynamics satisfaction)
    \item For all $t \in [t_{\text{i}},t_{\text{f}}]$, $\hat{\xi}_h(t) \in  \mathcal{C}_p(t)$ (Planning system state constraint satisfaction)
    \item There exists $T_r \in [t_{\text{i}},t_{\text{f}}]$ such that $\hat{\xi}_h(T_r) \in \mathcal{G}_p(t)$ (Planning system goal reaching) (optional)
\end{enumerate}

\begin{remark} \label{remark:conditions}
\ssreview{We consider C4 a \textit{soft constraint} in practice, since requiring there is some time $T_r$ so that the planning system is guaranteed to hit the planning system goal region is restrictive when the goal $\mathcal{G}_p$ is far from the system and the interval $[t_i,t_f]$ is short. However, it is} important for connecting Alg. \ref{alg:finite-interval-method} to the analysis in \ssminor{Sec.} \ref{sec:planning-analysis}. 
\end{remark}

\begin{remark} \label{remark:plan-generator}
    \ssminor{Our framework supports arbitrary} methods for generating the planned trajectory in line 5, \ssupdated{as} long as it can satisfy C1-C4. One \ssminor{possibility} is MPC, which will be demonstrated in \ssminor{Sec.} \ref{sec:case-study-planning}.
\end{remark}

The trajectory $\hat{\xi}_h(t), t \in [t_{\text{i}},t_{\text{f}}]$ is then returned as the output from the planning procedure. Note that \ssreview{for} static constraints, the optimal tracking controller  \eqref{eqn:optimal-tracking-controller} can already be applied to follow the planned trajectory to achieve constraint satisfaction and goal-reaching.


\subsection{Theoretical Guarantees for Tracking Planned Trajectories} \label{sec:planning-analysis}

In this section, we theoretically justify combining the optimal tracking controller with the trajectory planning method in Alg. \ref{alg:planning-bounded-interval} by providing Thm. \ref{theorem:planning-theorem}, which forms the basis of our analysis for trajectory replanning in later sections. Before doing so, let us define the \textit{optimal tracking system trajectory} $\xi_f^*:[t,T_{\text{off}}] \rightarrow \mathcal{S}$ and the \textit{worst-case planning system trajectory} $\xi_h^*:[t,T_{\text{off}}] \rightarrow \mathcal{P}$ in \eqref{eqn:optimal-tracking-system-trajectory} and \eqref{eqn:worst-planning-system-trajectory} respectively:
\begin{align}
    & \xi_f^*(t';s,p,t) := \xi_f(t'; s, t, u_s^*(\cdot; Ls-Mp, t), \label{eqn:optimal-tracking-system-trajectory} \\
    & \qquad d^*(\cdot; Ls-Mp, t)), \nonumber \\
    &\xi_h^*(t';p,s,t) := \xi_h(t'; p, t, u_p^*(\cdot; Ls-Mp, t)), \label{eqn:worst-planning-system-trajectory}
\end{align}
They correspond to the tracking system \eqref{eqn:time-varying-tracking-model} and planning system \eqref{eqn:time-varying-planning-model} trajectories starting from states $s$ and $p$ respectively, under the optimal tracking system controls \eqref{eqn:optimal-tracking-controls}, and worst-case planning system controls \eqref{eqn:worst-planning-controls} and disturbances \eqref{eqn:worst-disturbances}.


We now provide Lem. \ref{lemma:obstacle-avoidance}. It says \ssupdatedd{that} if the value level $c$ is sufficiently large and the planning system is initialized \ssreview{in} the planning system sublevel set $\mathcal{T}_p$, \ssupdatedd{then if} the worst-case planning system trajectory \eqref{eqn:worst-planning-system-trajectory} is \ssreview{in} $\mathcal{F}_B$, the optimal tracking system trajectory is \ssreview{in} $B$.

\begin{lemma} \label{lemma:obstacle-avoidance}

Consider $s \in \mathcal{S}$, $p \in \mathcal{P}$, $t \in [0,T]$, $c \in \mathbb{R}$ and $B \subseteq \mathcal{S}$. Suppose $c \geq \underbar{V}(s,t)$, and $p \in \mathcal{T}_p(s,t;c)$. Then, we have
\begin{equation}
    \xi_h^*(t';p,t) \in \mathcal{F}_B(t';c) \implies \xi_f^*(t';s,t) \in B,
\end{equation}
for all $t' \in [t,T]$.
\end{lemma}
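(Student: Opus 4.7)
The plan is to combine Prop.~\ref{prop:remain-in-teb} (sublevel set invariance) with the defining intersection structure of $\mathcal{F}_B$. Set $r := Ls - Mp$; the hypothesis $p \in \mathcal{T}_p(s,t;c)$ translates directly to $V(r,t) \leq c$ by the definition of the planning system sublevel set in \eqref{eqn:planning-tube}.

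First I would observe that the optimal tracking system trajectory $\xi_f^*(\cdot;s,p,t)$ in \eqref{eqn:optimal-tracking-system-trajectory} and the worst-case planning system trajectory $\xi_h^*(\cdot;p,s,t)$ in \eqref{eqn:worst-planning-system-trajectory} are driven by exactly the controls/disturbance triple $(u_s^*, u_p^*, d^*)$ appearing in Prop.~\ref{prop:remain-in-teb}. Because the relative dynamics \eqref{eqn:time-varying-relative-model} are defined as $g = Lf - Mh$, the map $t' \mapsto L\xi_f^*(t';s,p,t) - M\xi_h^*(t';p,s,t)$ satisfies the ODE \eqref{eqn:relative-system} under the same inputs, with initial condition $r$ at time $t$. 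By uniqueness of the solution to \eqref{eqn:relative-system}, it coincides with $\xi_g^*(t';r,t)$ defined in \eqref{eqn:optimal-relative-system-trajectory}. Applying Prop.~\ref{prop:remain-in-teb} then gives
\begin{equation}
V\bigl(L\xi_f^*(t';s,p,t) - M\xi_h^*(t';p,s,t),\, t'\bigr) \;\leq\; V(r,t) \;\leq\; c
\end{equation}
for every $t' \in [t, T_{\textnormal{off}}]$.

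Next I would finish by a quick contradiction argument against the definition \eqref{eqn:set-satisfaction-map} of $\mathcal{F}_B$. Suppose, toward a contradiction, that $\xi_f^*(t';s,p,t) \notin B$, so $\xi_f^*(t';s,p,t) \in B^{\mathsf{c}}$. Since $\xi_h^*(t';p,s,t) \in \mathcal{F}_B(t';c) = \bigcap_{s' \in B^{\mathsf{c}}} \mathcal{T}_p^{\mathsf{c}}(s',t';c)$, taking $s' = \xi_f^*(t';s,p,t)$ yields $\xi_h^*(t';p,s,t) \notin \mathcal{T}_p(\xi_f^*(t';s,p,t), t'; c)$, i.e.\ $V(L\xi_f^*(t';s,p,t) - M\xi_h^*(t';p,s,t),\, t') > c$, directly contradicting the previous display. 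Hence $\xi_f^*(t';s,p,t) \in B$.

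The main obstacle I anticipate is purely bookkeeping rather than conceptual: ensuring the reader sees that the two separately defined trajectories $\xi_f^*$ and $\xi_h^*$ are driven by the same worst-case triple $(u_s^*, u_p^*, d^*)$ indexed by the \emph{initial} relative state $r = Ls - Mp$, so that their $L,M$-combination really is the optimal relative trajectory to which Prop.~\ref{prop:remain-in-teb} applies. A minor check is that the hypothesis $c \geq \underbar{V}(s,t)$ is only needed to guarantee $\mathcal{T}_p(s,t;c)$ is nonempty so that the assumption $p \in \mathcal{T}_p(s,t;c)$ is vacuously non-trivial; it is not used elsewhere in the argument. No further assumptions about continuity of $V$ or existence of minimizers beyond what Prop.~\ref{prop:remain-in-teb} already provides are required.
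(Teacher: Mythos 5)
Your proposal is correct and follows essentially the same route as the paper: invoke Prop.~\ref{prop:remain-in-teb} to get $V(L\xi_f^*(t')-M\xi_h^*(t'),t')\leq c$, then derive a contradiction from the definition \eqref{eqn:set-satisfaction-map} of $\mathcal{F}_B$ if the tracking state were in $B^{\mathsf{c}}$. The only cosmetic difference is that the paper routes the contradiction through an auxiliary tracking-system sublevel set $\mathcal{T}_s$ and shows $\mathcal{T}_s \subseteq B$, whereas you apply it directly to the trajectory point; your explicit uniqueness argument identifying $L\xi_f^*-M\xi_h^*$ with $\xi_g^*$ is a detail the paper leaves implicit.
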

Thm. \ref{theorem:planning-theorem} follows as an application of Lem. \ref{lemma:obstacle-avoidance} and says that under the same conditions as Lem. \ref{lemma:obstacle-avoidance}, if the worst-case planning system trajectory satisfies the planning system constraints $\mathcal{F}_{\mathcal{C}}(t;c)$, the optimal tracking system trajectory achieves constraint satisfaction in $\mathcal{C}$. Moreover, if the worst-case planning system trajectory satisfies the planning system goal $\mathcal{F}_{\mathcal{G}}(t;c)$, the optimal tracking system trajectory \ssupdatedd{is in} the goal $\mathcal{G}$.
\begin{theorem} \label{theorem:planning-theorem}
Consider tracking system state $s\in \mathcal{S}$ and planning system state $p \in \mathcal{P}$ at planning start time $t_{\text{i}} \in [0,T_{\text{off}}]$. Let $t_{\text{f}} \in [t_{\text{i}},T_{\text{off}}]$ be the planning end time, and let $c \in \mathbb{R}$ denote the value level. Suppose $c$ \ssminor{and $p$ satisfy}
\begin{equation}
    c \geq \underbar{V}(s,t_i) \quad  \ssminor{\text{and} \quad p \in \mathcal{T}_p(s,t_i;c).} \label{eqn:value-function-level-condition/initial-planning-system-condition}
\end{equation}
Then for any $t \in [t_{\text{i}},t_{\text{f}}]$, \ssminor{the following hold:}
\begin{align}
    &{\xi}_h^*(t;p,t_i) \in \mathcal{F}_{\mathcal{C}}(t;c) \implies {\xi}_f^*(t;p,t_i) \in \mathcal{C}, \label{eqn:planning-obstacle-avoidance} \\
    &\xi_h^*(t;p,t_i) \in \mathcal{F}_{\mathcal{G}}(t;c) \implies {\xi}_f^*(t;s,t_i)\in \mathcal{G}. \label{eqn:planning-goal-reaching}
\end{align}
\end{theorem}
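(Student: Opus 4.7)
The plan is to recognize Theorem~\ref{theorem:planning-theorem} as two parallel instantiations of Lemma~\ref{lemma:obstacle-avoidance}, one with the set $B := \mathcal{C}$ giving the constraint-satisfaction implication \eqref{eqn:planning-obstacle-avoidance}, and one with $B := \mathcal{G}$ giving the goal-reaching implication \eqref{eqn:planning-goal-reaching}. The hypotheses \eqref{eqn:value-function-level-condition/initial-planning-system-condition}, namely $c \geq \underbar{V}(s,t_{\text{i}})$ and $p \in \mathcal{T}_p(s,t_{\text{i}};c)$, exactly match the prerequisites of Lemma~\ref{lemma:obstacle-avoidance} with initial time $t := t_{\text{i}}$. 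Because $t_{\text{f}} \leq T_{\text{off}} \leq T$, every $t \in [t_{\text{i}}, t_{\text{f}}]$ lies in the interval $[t_{\text{i}}, T]$ on which Lemma~\ref{lemma:obstacle-avoidance} is stated, so both implications follow by directly substituting $B$ and reading off the conclusion.

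Concretely, first I would invoke Lemma~\ref{lemma:obstacle-avoidance} with $B = \mathcal{C}$ to obtain \eqref{eqn:planning-obstacle-avoidance}, then invoke it again with $B = \mathcal{G}$ to obtain \eqref{eqn:planning-goal-reaching}. No further argument is needed beyond checking the domain and hypothesis matching, since Lemma~\ref{lemma:obstacle-avoidance} already carries out the translation from the planning-space condition (membership in $\mathcal{F}_B$) to the tracking-space conclusion (membership in $B$).

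If Lemma~\ref{lemma:obstacle-avoidance} were not already available, the main obstacle would be establishing that, along the paired optimal trajectories, $\xi_h^*(t;p,t_{\text{i}}) \in \mathcal{T}_p(\xi_f^*(t;s,t_{\text{i}}), t; c)$ for all $t \in [t_{\text{i}}, t_{\text{f}}]$. This is precisely where Proposition~\ref{prop:remain-in-teb} is essential: the initial condition $p \in \mathcal{T}_p(s,t_{\text{i}};c)$ is equivalent to $V(Ls - Mp, t_{\text{i}}) \leq c$, and the sublevel-set-invariance inequality \eqref{eqn:V-non-increasing} then propagates this bound forward, yielding $V\bigl(L\xi_f^*(t) - M\xi_h^*(t),\, t\bigr) \leq c$ for every $t \in [t_{\text{i}}, t_{\text{f}}]$, which is the desired membership. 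The remaining step is a short set-theoretic contrapositive based on the definition $\mathcal{F}_B(t;c) = \bigcap_{s' \in B^{\mathsf{c}}} \mathcal{T}_p^{\mathsf{c}}(s',t;c)$: if $\xi_h^*(t) \in \mathcal{F}_B(t;c)$ then $\xi_h^*(t) \notin \mathcal{T}_p(s',t;c)$ for every $s' \in B^{\mathsf{c}}$, so the membership $\xi_h^*(t) \in \mathcal{T}_p(\xi_f^*(t),t;c)$ forces $\xi_f^*(t) \in B$. Applying this reasoning with $B = \mathcal{C}$ and $B = \mathcal{G}$ completes the argument.
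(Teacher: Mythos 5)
Your proposal is correct and matches the paper's own proof exactly: the paper also establishes both implications by invoking Lemma~\ref{lemma:obstacle-avoidance} once with $B=\mathcal{C}$ and once with $B=\mathcal{G}$, after checking that the hypotheses \eqref{eqn:value-function-level-condition/initial-planning-system-condition} align with the lemma's prerequisites. Your additional sketch of how the lemma itself reduces to Prop.~\ref{prop:remain-in-teb} plus the set-theoretic contrapositive on $\mathcal{F}_B$ is likewise faithful to the paper's proof of that lemma.
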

\begin{remark} \label{remark:planning-arbitrary}
    Thm. \ref{theorem:planning-theorem} provides sufficient conditions for constraint satisfaction and goal reaching for the tracking system \eqref{eqn:time-varying-tracking-model} over $[t_{\text{i}},t_{\text{f}}]$, supposing that the optimal tracking system controls $u_s^*$ \eqref{eqn:optimal-tracking-controls} and worst-case disturbances $d^*$ \eqref{eqn:worst-disturbances} are applied to the tracking system \eqref{eqn:time-varying-tracking-model}, and the worst-case planning system controls $u_p^*$ \eqref{eqn:worst-planning-controls} are applied to the planning system \eqref{eqn:time-varying-planning-model}. Since Thm. \ref{theorem:planning-theorem} relies on Prop. \ref{prop:remain-in-teb}, the result should still hold when these are replaced by the optimal tracking controller \eqref{eqn:optimal-tracking-controller}, and arbitrary disturbances and planning system controls in $\mathcal{D}$ and $\mathcal{U}_p$, for the same reasons pointed out in Remark \ref{remark:practical-conditions}.
\end{remark}
We now discuss the connections between Thm. \ref{theorem:planning-theorem} and the scheme where the optimal tracking controller \eqref{eqn:optimal-tracking-controller} is applied to track trajectories planned via Alg. \ref{alg:planning-bounded-interval}. Condition C2 ensures that the planning system evolves according to \eqref{eqn:time-varying-planning-model}, and C1 ensures the planning system controls take values in $\mathcal{U}_p$, so supposing that the disturbances take values in $\mathcal{D}$, the guarantees in Thm. \ref{theorem:planning-theorem} are still applicable to the tracking system controlled by the proposed scheme via Remark \ref{remark:planning-arbitrary} --- we only need to check the conditions in Thm. \ref{theorem:planning-theorem} hold after replacing the worst-case planned trajectory $\xi_h^*$ with $\hat{\xi}_h$. In Alg. \ref{alg:planning-bounded-interval}, the value level $c$ \ssminor{and the planning system state $p$ satisfy} \eqref{eqn:value-function-level-condition/initial-planning-system-condition}, and the planned trajectory $\hat{\xi}_h$ satisfies C3 and (optionally) C4. Therefore, \eqref{eqn:planning-obstacle-avoidance} and \eqref{eqn:planning-goal-reaching} can be applied to conclude that constraint satisfaction and (optionally) goal-reaching are achieved over the interval $[t_{\text{i}},t_{\text{f}}]$.

Although safety is guaranteed when $\mathcal{C}$ is static, the current procedure does not handle constraints updated online. This limitation motivates trajectory replanning in \ssminor{Sec.} \ref{sec:replanning}.

\section{Trajectory Replanning and Tracking} \label{sec:replanning}

Although trajectory planning using Alg. \ref{alg:planning-bounded-interval} from \ssminor{Sec.} \ref{sec:planning} can be combined with tracking using \eqref{eqn:optimal-tracking-controller} to achieve static constraint satisfaction and goal-reaching, our problem setup in \ssreview{\ref{sec:problem-method-overview}} involves constraints \ssupdatedd{that} are updated online in real time. We clarify exactly what we mean by online constraint updates in \ssminor{Sec.} \ref{sec:online-constraints}. Subsequently, we propose a method that handles online constraint updates by replanning trajectories, over a task horizon $T_{\text{run}} \leq T_{\text{off}}$ in \ssminor{Sec.} \ref{sec:replanning-finite}. Moreover, our method allows for flexible reinitialization of the planning system, which can be exploited for improved performance. The theoretical properties of our method are subsequently \ssupdatedd{analyzed} in \ssminor{Sec.} \ref{sec:replanning-analysis} with respect to our objectives of online constraint satisfaction and goal-reaching. 

\subsection{The Online Constraint Satisfaction Problem} \label{sec:online-constraints}
Throughout this section, we consider constraints that are updated online in real time, denoted as $\mathcal{C}(t) \subseteq \mathcal{S}$ for $t \in [0,T_{\text{run}}]$. Since $\mathcal{C}(t)$ represents online constraints, at time $t \in [0,T_{\text{run}}]$, $\mathcal{C}(t')$ is unknown for $t' > t$. Moreover, it is assumed that $\mathcal{C}(t)$ changes at a finite number of increasing time instances $\{t^{\text{update}}_i\}_{i = 0}^{N_{\text{update}}} \subseteq [0,T_{\text{run}}]$ with $t^{\text{update}}_0=0$ and $t^{\text{update}}_{N_{\text{update}}}=T_{\text{run}}$, where $N_{\text{update}}$ is the number of times the constraints are updated. Specifically, for all $i \in \{0,\hdots,N_{\text{update}}-1 \}$ and $t_1,t_2 \in [t_i^{\text{update}},t_{i+1}^{\text{update}})$, we assume that
\begin{equation}
    \mathcal{C}(t_1) = \mathcal{C}(t_2). \label{eqn:assume-obstacles-updated-finite}
\end{equation}

\subsection{Method for Replanning and Tracking} \label{sec:replanning-finite}

Our online trajectory replanning and tracking method for time-varying systems subject to disturbances $d \in \mathcal{D}$ over a bounded time interval $t \in [0,T_{\text{run}}]$ is summarised in Alg. \ref{alg:finite-interval-method}.

\begin{algorithm}
\caption{Safe Trajectory Replanning and Tracking for Time-Varying Systems}
\begin{algorithmic}[1]
    \STATE \textbf{Input:} $L \in \mathbb{R}^{n_r \times n_s}$, $M \in \mathbb{R}^{n_r \times n_p}$, $T_{\text{off}} \leq T$, $f(t,s,u_s,d) \text{ for } t \in [0,T_{\text{off}}]$, $h(p,d)$, $\mathcal{G} \subseteq \mathcal{S}$, $T_{\text{run}} \leq T_{\text{off}}$
    \STATE Compute $V(r,t), t \in [0,T_{\text{off}}]$ and $u_s^*(r,t), t \in [0,T_{\text{off}}]$ using OfflineStage$(L,M,C,f,h)$ from Alg. \ref{alg:offline-stage}
    \STATE $k \leftarrow -1$
\FOR{$t \in [0,T_{\text{run}})$}
    \STATE Measure current tracking system state $s$ from \eqref{eqn:time-varying-tracking-model}
    \IF{$s \in \mathcal{G}$}
        \STATE Return (goal reached).
    \ENDIF
    \STATE Update constraints $\mathcal{C}_{\text{curr}} \leftarrow \mathcal{C}(t)$
    \IF{($\mathcal{C}_{\text{curr}}$ has changed) OR (decide to replan at time $t$ 
    
    (see Remark \ref{remark:choose-replan}))}
        \STATE $k \leftarrow k + 1$
        \STATE $(t_k,s_k) \leftarrow (t,s)$
        \STATE Choose $c_k \geq \underbar{V}(s_k,t_k)$ from \eqref{eqn:minimum-level-value-function} (see Remark \ref{remark:replan-level})
        \STATE Choose $p_k \in \mathcal{T}_p(s_k,t_k;c_k)$ from \eqref{eqn:planning-tube} (See Remark 
        
        \ref{remark:replan-level})
        \STATE $\hat{\xi}_{h,k}(t'), t' \in [t_k,T_{\text{run}}] \leftarrow \text{PlanTrajectoryOverInterval}$
        
        $  ( V, h, t_k, s_k, T_{\text{run}}, \mathcal{C}_{\text{curr}}, \mathcal{G}, c_k,p_k)$ from Alg. \ref{alg:planning-bounded-interval}
    \ENDIF
    \STATE $\hat{\xi}_h^R(t)\leftarrow \hat{\xi}_{h,k}(t)$
    \STATE Compute $u_s \leftarrow u_s^*(Ls-M\hat{\xi}_h^R(t),t)$ following \eqref{eqn:optimal-tracking-controller}
    \STATE Apply $u_s$ as control input to tracking system \eqref{eqn:time-varying-tracking-model}
    \ENDFOR
\end{algorithmic}    \label{alg:finite-interval-method}
\end{algorithm}

The inputs to the procedure include the matrices $L \in \mathbb{R}^{n_r \times n_s}$ and $M \in \mathbb{R}^{n_r \times n_p}$ required for formulating the relative model \eqref{eqn:time-varying-relative-model}, the tracking system model \eqref{eqn:time-varying-tracking-model}, the planning system model \eqref{eqn:time-varying-planning-model}, the goal $\mathcal{G} \subseteq \mathcal{S}$ and the task horizon $T_{\text{run}}$. Firstly, during the offline stage of the algorithm, we compute the value function $V(r,t)$ and the optimal tracking controller $u_s^*(r,t)$ over $t \in [0,T_{\text{off}}]$ via OfflineStage (Alg. \ref{alg:offline-stage}). Subsequent lines are then implemented online in real time. In particular, for each time over the task horizon $t \in [0,T_{\text{off}}]$, the current state $s$ of the tracking system is measured, and if $s$ is inside $\mathcal{G}$, then the algorithm stops running. The current state constraint \ssupdatedd{set} considered within the algorithm --- $\mathcal{C}_{\text{curr}}$ --- is \ssupdatedd{selected} based on the most recent measurements of the updated constraints $\mathcal{C}(t)$. If  $\mathcal{C}_{\text{curr}}$ has changed or we choose to replan at the current time $t$, then $k$ is incremented, we store the current time and tracking system state as $t_k$ and $s_k$ respectively, choose a value level $c_k$ satisfying $c_k \geq \underbar{V}(s_k,t_k)$, choose an initial planning system state $p_k$ satisfying $p_k \in \mathcal{T}_p(s_k,t_k;c_k)$, and then apply the trajectory replanning procedure \textit{PlanOverBoundedInterval} from Alg. \ref{alg:planning-bounded-interval} to solve for a trajectory $\hat{\xi}_{h,k}(t')$ over $t' \in [t_k,T_{\text{run}}]$. Outside of the trajectory replanning procedure, the current state of the planning system $\hat{\xi}_h^R(t)$ is set equal to $\hat{\xi}_{h,k}(t)$, and the optimal tracking controller \eqref{eqn:optimal-tracking-controller} is applied to the tracking system \eqref{eqn:time-varying-tracking-model} based on the current relative state $Ls_k-M\hat{\xi}_h^R(t)$ and current time $t$. The online procedure repeats until $T_{\text{run}}$. \ssreview{The online phase of Alg.~\ref{alg:finite-interval-method} is visualized as a block diagram in Fig.~\ref{fig:overview-replanning-tracking}.}

\begin{figure}[h]
    \centering
    \includegraphics[width=0.49\textwidth]{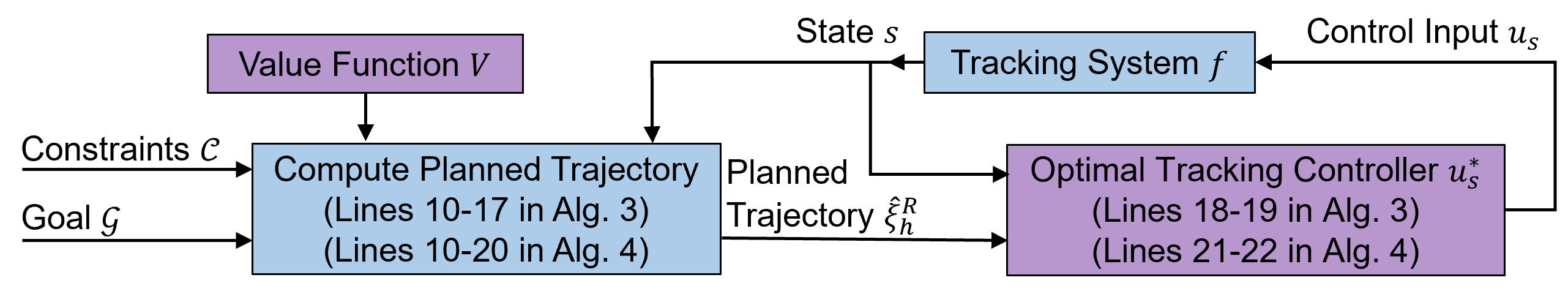}
    \caption{\ssreview{Block diagram \ssupdated{for online phase of Algs.~\ref{alg:finite-interval-method} and \ref{alg:periodic-method}}. Objects computed offline are colored purple.}}
    \label{fig:overview-replanning-tracking}
\end{figure}

The most expensive part of Alg. \ref{alg:finite-interval-method} is line 2 when OfflineStage is called. The online stage (lines 4-20) can be \ssreview{executed} in real time, \ssupdated{as} long as a computationally efficient real-time planning procedure is selected in line 15 (recall Remark \ref{remark:plan-generator}). 
\ssreview{This is possible by taking advantage of the framework's support for simple planning models.}

Note that an alternative strategy for choosing controls in line 18 is possible, where an arbitrary controller is used when the relative system $Ls-M\hat{\xi}_h^R(t)$ is within the interior of the sublevel set of $V(r,t)$ corresponding to $c_k$, and \eqref{eqn:optimal-tracking-controller} is applied when the relative system is close to the border. We refer the reader to \cite{Chen2021FaSTrack:Tracking} for further details.

\begin{remark} \label{remark:choose-replan}
Without loss of generality, trajectory planning must always be enforced at time $t=0$ and when constraints are updated. Beyond this, our method can flexibly support multiple different conditions in line 10. For example, replanning may be performed after every time interval of some fixed length, as is common in receding horizon control.
\end{remark}

\begin{remark} \label{remark:replan-level}
We do not prescribe a fixed way to set $c_k$ and $p_k$ in Alg. \ref{alg:finite-interval-method}, except they must satisfy $c_k \geq \underbar{V}(s_k,t_k)$ and $p_k \in \mathcal{T}_p(s_k,t_k;c_k)$. This is unlike FaSTrack \cite{Chen2021FaSTrack:Tracking}, which reinitializes the planning system state as equal to the state at time $t$ from the previous planned trajectory in line 15. In other words, our planning system \ssupdatedd{can} ``teleport'', which can be exploited for adaptive performance in the constraint satisfaction and goal-reaching problem. Specifically, choosing larger $c_k$ allows $p_k$ to be initialized further away from $s_k$, which encourages the optimal tracking controller \eqref{eqn:optimal-tracking-controller} to apply controls to catch up. On the other hand, smaller $c_k$ will produce less restrictive planning system constraints $\mathcal{C}_p(t)$, which may be required for navigation through tight environments. This tradeoff will be demonstrated inside simulations in \ssminor{Sec.} \ref{sec:case-study-replanning}. 
\end{remark}

\subsection{Theoretical Guarantees for Replanning and Tracking} \label{sec:replanning-analysis}

In this section, we theoretically justify the usage of Alg. \ref{alg:finite-interval-method} to compute controls for the tracking system \eqref{eqn:time-varying-tracking-model} \ssreview{via} Cor. \ref{cor:replanning-bounded-interval}. Before \ssreview{doing so}, we introduce mathematical objects required for our analysis.

Let $K \in \mathbb{N}$ denote the number of times trajectory replanning occurs, and suppose $T_{\text{run}} \leq T_{\text{off}} $. Consider a sequence of planning system states $\{ p_k \}_{k=0}^{K-1} \subseteq \mathcal{P}$, and a sequence of increasing time instances $\{ t_k \}_{k=0}^K \subseteq [0,T_{\text{off}}]$ where replanning occurs, satisfying $t_0 = 0$ and $t_K = T_{\text{run}}$. 
Assume $\{t_i^{\text{update}}\}_{i=0}^{N_{\text{update}}} \subseteq \{ t_k \}_{k=0}^K$ holds, \ssreview{so} replanning occurs whenever constraints are updated. 
We define the \textit{replanned trajectory} $\xi_h^R:[0,T_{\text{run}}) \rightarrow \mathcal{P}$ as the concatenation of the worst-case planning system trajectories \eqref{eqn:worst-planning-system-trajectory} \ssreview{on each sub-interval} $[t_k, t_{k+1})$ for $0 \leq k \leq K-1$, where each trajectory is determined by the planning system state $p_k$ and tracking system state $s_k$ at time $t_k$:
\begin{align}
    &\xi_h^R(t) :=
    \xi_h^*(t; p_k, s_k, t_k) \quad t \in [t_k, t_{k+1}), \label{eqn:replanned-trajectory}
\end{align}
for $0 \leq k \leq K-1$. Here, $s_k$ is recursively defined based on the optimal tracking system trajectory \eqref{eqn:optimal-tracking-system-trajectory} initialized at tracking system state $s_{k-1}$ and $p_{k-1}$ at time $t_{k-1}$:
\begin{equation}
    s_k := \xi_f^*(t_k ; s_{k-1}, p_{k-1}, t_{k-1}),
\end{equation}
and $s_0$ is the state of the tracking system \eqref{eqn:time-varying-tracking-model} at time $0$.

Next, define the \textit{replanned tracking system controls} $u_s^R:[0,T_{\text{run}}) \rightarrow \mathcal{U}_s$ as the concatenation of the optimal tracking system controls \eqref{eqn:optimal-tracking-controls} over each sub-interval between replanning time instances:
\begin{align}
    &u_s^R(t) := 
    u_s^*(t; Ls_k - Mp_k, t_k) \quad t \in [t_k, t_{k+1}), \label{eqn:replanned-tracking-controls}
\end{align}
for $0 \leq k \leq K-1$. Moreover, define the \textit{replanned disturbances} $d^R:[0,T_{\text{run}})\rightarrow\mathcal{D}$ as the concatenation of the worst-case disturbances \eqref{eqn:worst-disturbances} over each sub-interval:
\begin{align}
    &d^R(t) := 
    d^*(t; Ls_k - Mp_k, t_k) \quad t \in [t_k, t_{k+1}).\label{eqn:replanned-disturbances}
\end{align}
for $0 \leq k \leq K-1$. \ssreview{Next}, the \textit{replanned tracking system trajectory} is the trajectory of the tracking system \ssreview{starting from} $s_0$ at time $0$ under the replanned tracking system controls \eqref{eqn:replanned-tracking-controls} and disturbances \eqref{eqn:replanned-disturbances}:
\begin{align}
    &\xi_f^R(t) := \xi_f(t; s_0, 0, u_s^R(\cdot), d^R(\cdot)), t \in [0,T_{\text{run}}). \label{eqn:replanned-tracking-system-trajectory}
\end{align}
By definition, $\xi_h^R(t)$,
$u_s^R(t)$, $d^R(t)$ over $t \in [0,T_{\text{off}})$ are uniquely determined by $s_0$, $\{p_k \}_{k=0}^{k'(t)}$ and $\{t_k \}_{k=0}^{k'(t)}$, where $k'(t) = \max \{ k \in \mathbb{N}_0 : t_k \leq t \}$.

With these mathematical objects defined, we are now ready to provide Cor. \ref{cor:replanning-bounded-interval}. It says that for each replanning step $k \in \{0,\hdots,K-1\}$ and associated time interval $[t_k,t_{k+1})$, \ssupdated{as} long as the corresponding value level $c_k$ is sufficiently large and the initial planning system state $p_k$ on this interval is inside the planning system sublevel set $\mathcal{T}_p(\xi_f^R(t_k),t_k;c_k)$ \eqref{eqn:minimum-level-value-function-replanning/viable-planning-states-replanning}, then at any time in this interval that the replanned trajectory $\xi_h^R(t)$ satisfies the planning system constraints $\mathcal{F}_{\mathcal{C}(t)}(t;c_k)$ (goal $\mathcal{F}_{\mathcal{G}}(t;c_k)$), the replanned tracking system trajectory $\xi_f^R(t)$ satisfies the constraints $\mathcal{C}(t)$ \eqref{eqn:replanning-obstacle-avoidance} (goal $\mathcal{G}$ \eqref{eqn:replanning-goal-reaching}).

\begin{corollary} \label{cor:replanning-bounded-interval}
Let $\{ c_k \}_{k=0}^{K-1} \subseteq \mathbb{R}$ be a sequence of value levels. Suppose for all $k \in \{ 0, \hdots, K-1 \}$,
\begin{equation}
    c_k \geq \underbar{V}(\xi_f^R(t_k),t_k) \quad \ssminor{\text{and}} \quad p_k \in \mathcal{T}_p(\xi_f^R(t_k),t_k;c_k). \label{eqn:minimum-level-value-function-replanning/viable-planning-states-replanning}
\end{equation}
Then, for any $k \in \{0, \hdots, K-1\}$ and any $t \in [t_{k},t_{k+1})$\ssminor{, we have:}
\begin{align}
    &\xi_h^R(t) \in \mathcal{F}_{\mathcal{C}(t_k)}(t';c_k) \implies \xi_f^R(t) \in \mathcal{C}(t),  \label{eqn:replanning-obstacle-avoidance} \\
    &\xi_h^R(t) \in \mathcal{G}_p(t;c_k) \implies \xi_f^R(t) \in \mathcal{G}. \label{eqn:replanning-goal-reaching}
\end{align}
\end{corollary}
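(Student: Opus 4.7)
The plan is to reduce the corollary to a subinterval-by-subinterval application of Thm.~\ref{theorem:planning-theorem}. Fix an index $k \in \{0,\dots,K-1\}$ and $t \in [t_k,t_{k+1})$. The goal is to identify the restriction of $\xi_h^R$ and $\xi_f^R$ to $[t_k,t_{k+1})$ with the worst-case planning trajectory $\xi_h^*(\cdot;p_k,s_k,t_k)$ and optimal tracking trajectory $\xi_f^*(\cdot;s_k,p_k,t_k)$ respectively, and then invoke the sufficient conditions \eqref{eqn:planning-obstacle-avoidance}--\eqref{eqn:planning-goal-reaching} with $t_\mathrm{i}=t_k$, $t_\mathrm{f}=t_{k+1}$, $s=s_k$, $p=p_k$, $c=c_k$, and constraint set $\mathcal{C}(t_k)$.

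First I would establish the key identity $\xi_f^R(t_k)=s_k$ by induction on $k$. The base case $k=0$ is immediate from $s_0=\xi_f^R(0)$. For the inductive step, observe that on $[t_{k-1},t_k)$ the replanned controls $u_s^R$ and disturbances $d^R$ coincide by \eqref{eqn:replanned-tracking-controls}--\eqref{eqn:replanned-disturbances} with $u_s^*(\cdot;Ls_{k-1}-Mp_{k-1},t_{k-1})$ and $d^*(\cdot;Ls_{k-1}-Mp_{k-1},t_{k-1})$; combined with the inductive hypothesis $\xi_f^R(t_{k-1})=s_{k-1}$, uniqueness of solutions of \eqref{eqn:time-varying-tracking-model} gives $\xi_f^R(t)=\xi_f^*(t;s_{k-1},p_{k-1},t_{k-1})$ on $[t_{k-1},t_k)$, so by continuity $\xi_f^R(t_k)=\xi_f^*(t_k;s_{k-1},p_{k-1},t_{k-1})=s_k$. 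The same uniqueness argument then yields $\xi_f^R(t)=\xi_f^*(t;s_k,p_k,t_k)$ throughout $[t_k,t_{k+1})$, and by \eqref{eqn:replanned-trajectory} we also have $\xi_h^R(t)=\xi_h^*(t;p_k,s_k,t_k)$ there.

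Next I would apply Thm.~\ref{theorem:planning-theorem} on $[t_k,t_{k+1})\subseteq[0,T_\mathrm{off}]$. The hypothesis $c_k\geq\underline{V}(\xi_f^R(t_k),t_k)$ combined with $\xi_f^R(t_k)=s_k$ furnishes $c_k\geq\underline{V}(s_k,t_k)$, and likewise $p_k\in\mathcal{T}_p(s_k,t_k;c_k)$; these are precisely \eqref{eqn:value-function-level-condition/initial-planning-system-condition}. Because replanning is enforced at every constraint-update instance (i.e.\ $\{t_i^{\text{update}}\}\subseteq\{t_k\}$), assumption \eqref{eqn:assume-obstacles-updated-finite} yields $\mathcal{C}(t)=\mathcal{C}(t_k)$ for every $t\in[t_k,t_{k+1})$, so Thm.~\ref{theorem:planning-theorem} with static constraint set $\mathcal{C}(t_k)$ gives
\begin{align}
\xi_h^*(t;p_k,s_k,t_k)\in\mathcal{F}_{\mathcal{C}(t_k)}(t;c_k)&\implies\xi_f^*(t;s_k,p_k,t_k)\in\mathcal{C}(t_k),\\
\xi_h^*(t;p_k,s_k,t_k)\in\mathcal{F}_{\mathcal{G}}(t;c_k)&\implies\xi_f^*(t;s_k,p_k,t_k)\in\mathcal{G}.
\end{align}
Substituting the identifications of the previous paragraph converts these into \eqref{eqn:replanning-obstacle-avoidance}--\eqref{eqn:replanning-goal-reaching}, after using $\mathcal{C}(t_k)=\mathcal{C}(t)$ once more on the right-hand side and recalling $\mathcal{G}_p(t;c_k)=\mathcal{F}_{\mathcal{G}}(t;c_k)$.

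The main obstacle I anticipate is the bookkeeping in the inductive step: one must carefully invoke uniqueness for the coupled tracking/planning/disturbance ODEs across a concatenation of subintervals, so that the piecewise-defined $\xi_f^R$ genuinely coincides on each piece with the single-interval worst-case trajectory to which Thm.~\ref{theorem:planning-theorem} applies. Everything else — translating the hypotheses \eqref{eqn:minimum-level-value-function-replanning/viable-planning-states-replanning} into those of Thm.~\ref{theorem:planning-theorem} and handling $\mathcal{C}(t)$ on the replanning subintervals via \eqref{eqn:assume-obstacles-updated-finite} — is essentially a matter of unpacking definitions.
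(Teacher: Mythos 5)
Your proposal is correct and follows essentially the same route as the paper's proof: both establish $\xi_f^R(t_k)=s_k$ by induction, identify the restrictions of $\xi_h^R$ and $\xi_f^R$ on $[t_k,t_{k+1})$ with $\xi_h^*(\cdot;p_k,s_k,t_k)$ and $\xi_f^*(\cdot;\xi_f^R(t_k),p_k,t_k)$ via the definitions of $u_s^R$ and $d^R$, and then invoke Thm.~\ref{theorem:planning-theorem} on each subinterval with $\mathcal{C}(t)=\mathcal{C}(t_k)$. Your write-up is if anything slightly more explicit about the uniqueness-of-solutions bookkeeping that the paper compresses into a chain of flow identities.
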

\begin{remark} \label{remark:replanning-arbitrary}
    Cor. \ref{cor:replanning-bounded-interval} provides sufficient conditions for collision avoidance and/or goal-reaching over the task horizon $[0,T_{\text{run}})$ supposing that within each sub-interval $[t_k,t_{k+1})$ for $k \in \{ 0, \hdots, K-1 \}$, the optimal tracking system controls $u_s^*$ \eqref{eqn:optimal-tracking-controls} and worst-case disturbances \eqref{eqn:worst-disturbances} are applied to \eqref{eqn:time-varying-tracking-model} and the worst-case planning system controls \eqref{eqn:worst-planning-controls} are applied to \eqref{eqn:time-varying-planning-model}, after reinitializing the planning system state at $p_k$. The proof naturally follows by applying Thm. \ref{theorem:planning-theorem} in each sub-interval of the task horizon, and therefore the guarantees should also hold when the optimal tracking controller \eqref{eqn:optimal-tracking-controller} and arbitrary planning system controls and disturbances taking values in $\mathcal{U}_p$ and $\mathcal{D}$ are applied within each sub-interval, for the same reasons pointed out in Remark \ref{remark:planning-arbitrary}. 
\end{remark}

We now discuss the connections between Alg. \ref{alg:finite-interval-method} and Cor. \ref{cor:replanning-bounded-interval}. Note that every time replanning occurs in Alg. \ref{alg:finite-interval-method}: 1) the planning system controls for each replanned trajectory take values in $\mathcal{U}_p$ since C1 must be satisfied when planning the trajectory in line 15; 2) the tracking system controls are generated by the optimal tracking controller \eqref{eqn:optimal-tracking-controller} in line 18; 3) the disturbances take values in $\mathcal{D}$ by assumption. Thus, the results in Cor. \ref{cor:replanning-bounded-interval} \ssupdatedd{apply} to the tracking system controlled by Alg. \ref{alg:finite-interval-method}, as discussed in Remark \ref{remark:replanning-arbitrary}. To see that collision avoidance or goal-reaching is achieved, we only need to check that the conditions in Cor. \ref{cor:replanning-bounded-interval} are satisfied after replacing $\xi_h^R(t)$ and $\xi_f^R(t_k)$ with $\hat{\xi}_h^R(t)$ and $s_k$ from Alg. \ref{alg:finite-interval-method}. This can be verified by comparing \ssupdated{lines} 13 \ssminor{and 14} in Alg. \ref{alg:finite-interval-method} to \ssminor{\eqref{eqn:minimum-level-value-function-replanning/viable-planning-states-replanning}}, and $\hat{\xi}_f^R(t)=\hat{\xi}_{h,k}(t)$ (which satisfies C3 and C4 via line 15) to \eqref{eqn:replanning-obstacle-avoidance} and \eqref{eqn:replanning-goal-reaching}.
Therefore, \eqref{eqn:replanning-obstacle-avoidance} and (optionally) \eqref{eqn:replanning-goal-reaching} can be applied to find that constraint satisfaction and goal-reaching are achieved within the corresponding sub-interval up to the next time $t_{k+1}$ that replanning occurs. Since replanning is repeated throughout the task horizon, we can conclude that Alg. \ref{alg:finite-interval-method} ensures the tracking system \eqref{eqn:time-varying-tracking-model} achieves constraint satisfaction and (optionally) goal reaching over the entire task horizon.

\begin{remark} \label{remark:tube}
    \ssreview{The argument in Cor.~\ref{cor:replanning-bounded-interval} is similar to the idea of sequentially composing tubes (or funnels) in \cite{Majumdar2017FunnelPlanning:}, but there are some differences. Recall that a tube is a family of subsets of the state space $\mathcal{S}$ parameterized over time $t$ which the tracking system is guaranteed to remain inside. For each sub-interval $[t_k,t_{k+1})$ where $0 \leq k \leq K-1$, there is an implicit tube $\mathcal{T}_s(t;\xi_h^R\ssupdated{(\cdot)},c_k):=\{ s \in \mathcal{S} : V(Ls-M\xi_h^R(t), t) \leq c_k \}$ associated with the planned trajectory $\xi_h^R$ and value level $c_k$. Condition \eqref{eqn:minimum-level-value-function-replanning/viable-planning-states-replanning} \ssupdatedd{ensures} that the `inlet' of each tube contains the specific tracking system at the corresponding replanning time, as \ssupdatedd{illustrated} in Fig.~\ref{fig:visualise-tubes} a). This is unlike \cite{Majumdar2017FunnelPlanning:}, which requires that the `outlet' of the previous tube is a subset of the `inlet' of the current tube, as illustrated in Fig.~\ref{fig:visualise-tubes} b).}
\end{remark}

\begin{figure} 
    \centering
    \includegraphics[width=0.49\textwidth]{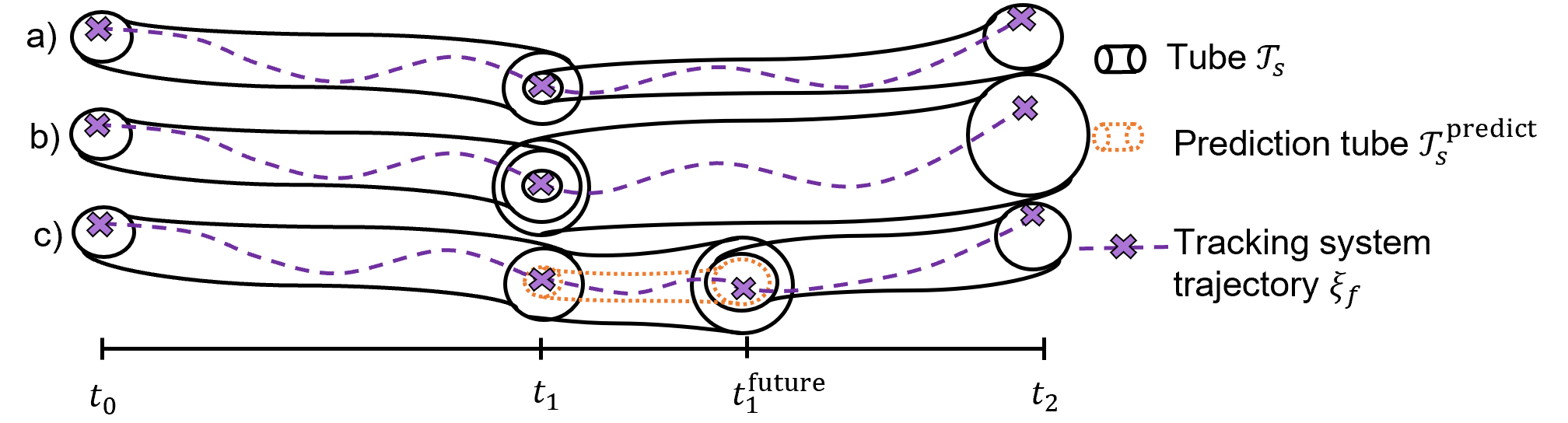}
    \caption{\ssreview{Illustration of tubes containing the tracking system trajectory. 
    Case a) represents the method in this paper where the inlet just needs to contain the tracking system state. 
    Case b) is analogous to \cite{Majumdar2017FunnelPlanning:} where the outlet of the previous tube is a subset of the inlet of the current tube.
    Case c) represents a modification to handle non-instantaneous trajectory planning.}}
    \label{fig:visualise-tubes}
\end{figure}

\begin{remark} \label{remark:non-instant-planning}
    \ssreview{Our strategy and analysis \ssupdatedd{do} not account for the time required to run the online part of Alg.~\ref{alg:finite-interval-method}, which is subject to hardware-related constraints. Essentially, we assume all lines in Alg.~\ref{alg:finite-interval-method} resolve instantaneously, which is not true, even when real-time planning methods exploiting simple planning models are chosen.
    A more realistic assumption is the online computation time is concentrated in line~15 of Alg.~\ref{alg:finite-interval-method} (during planning), since it dominates computation in practice.
    Although computation time can be drastically reduced using simple planning system models (e.g. single integrator), it would still not be instantaneous.
    However, assuming we know an upper bound $T_{\textnormal{compute}} \geq 0$ on the computation time for planning, we can modify Alg.~\ref{alg:finite-interval-method} to account for it. This is done by turning line~15 into a parallel process that generates a new planned trajectory from a \textit{future} time $t^{\textnormal{future}}_k$ greater than the current time $t_k$ plus $T_{\textnormal{compute}}$, while simultaneously tracking the \ssupdatedd{previously} planned trajectory with $u_s^*$. 
Note that the new planned trajectory cannot be initialized based on the `exact' tracking system state at $t^{\textnormal{future}}_k$ \ssupdatedd{analogously} to line~14 since this state is unknown. However, we can instead define a `prediction tube' $\mathcal{T}_s^{\textnormal{predict}}(t;s_k,\xi_h^R\ssupdated{(\cdot)}):= \mathcal{T}_s(t;\xi_h^R\ssupdated{(\cdot)},V(Ls_k - M\xi_h^R(t_k),t_k))$  guaranteed to contain the tracking system over $t \in [t_k,t^{\textnormal{future}}_k]$, and ensure that the inlet of the next tube contains its outlet, as \ssupdatedd{visualized} in Fig.~\ref{fig:visualise-tubes} c).
This is implemented by initializing the next planned trajectory at $t^{\textnormal{future}}_k$ in the set $\bigcap_{s \in \mathcal{T}_s^{\textnormal{predict}}(t^{\textnormal{future}}_k;s_k,\xi_h^R\ssupdated{(\cdot)})}\mathcal{T}_p(s,t^{\textnormal{future}}_k;c_k)$ with $c_k \geq \underline{c}$, and tracking the trajectory with $u_s^*$ after $t^{\textnormal{future}}_k$. Despite outlining the modifications to account for $T_{\textnormal{compute}}$, we continue assuming instantaneous planning in the paper for simplicity.}
\end{remark}
\section{Trajectory Replanning and Tracking for Periodic Systems} \label{sec:replanning-periodic}

The trajectory replanning and tracking method in \ssminor{Sec.} \ref{sec:replanning} can guarantee collision avoidance and \ssupdatedd{goal-reaching} with constraints updated online in real time but is only applicable over a restricted horizon $T_{\text{run}} \leq T_{\text{off}}$. We overcome this limitation for the specific case of periodic systems (\ssminor{Sec.} \ref{sec:periodic-systems}), allowing for $T_{\text{run}} \geq 0$ generally. We propose a trajectory replanning and tracking method for periodic systems in \ssminor{Sec.} \ref{sec:replanning-method-periodic}. The theoretical properties of our method are subsequently \ssupdatedd{analyzed} in \ssminor{Sec.} \ref{sec:replanning-analysis-periodic} with respect to our objectives of online constraint satisfaction and goal-reaching. 

\subsection{Periodic Tracking Systems} \label{sec:periodic-systems}
Throughout this section, assume that the tracking system dynamics are periodic: 
\begin{align}
    f(t,s,u_s,d)= f(t+\tau,s,u_s,d), \ t \in [0,\infty), \label{eqn:periodic-tracking}
\end{align}
where $u_s \in \mathcal{U}_s$, $u_p \in \mathcal{U}_p$, $d \in \mathcal{D}$, and $\tau \in [0,\infty)$ is the \ssupdated{period}. Although this seems like a strong assumption, there are practical settings where it is reasonable, such as underwater robots subject to wave disturbances \cite{Fernandez2017ModelWaves}. Moreover, we assume $T_{\text{off}} > \tau$ to ensure all time-varying information is captured during offline precomputation.

The usefulness of periodic systems is that methods and guarantees applicable over time intervals within the offline computation horizon $T_{\text{off}}$, are also applicable to intervals outside of it, where the dynamics are equivalent since the system is periodic. Fundamentally, the reason for this is because the value function $V(r,t)$ computed explicitly over $t\in[k \tau, k \tau + T_{\text{off}}]$ for $k \in \mathbb{N}_0$ is equivalent to $V(r,t - k \tau)$ for $t \in [k \tau, k \tau + T_{\text{off}}]$, where $t-k\tau \in [0,T_{\text{off}}]$ holds. Thus, by computing $V(r,t)$ over $[0,T_{\text{off}}]$, we have essentially obtained the value function (and optimal controller) computed over any interval $t \in [k \tau, k \tau + T_{\text{off}}]$ for $k \in \mathbb{N}_0$ for free, without explicitly performing the computation. This idea underlies the  analysis in \ssminor{Sec.} \ref{sec:replanning-analysis-periodic}.

To relate arbitrary times to the offline computation interval, we introduce the \textit{earliest equivalent interval map}:
\begin{equation}
    \mathcal{M}_{\tau}([t_{a},t_{b}]) := [t_{a} - \floor{t_{a} / \tau} \tau, t_{b} - \floor{t_{a} / \tau} \tau]
\end{equation}
for $t_{b} \geq t_{a} \geq 0$, where $\floor{\cdot}$ is the floor function. It maps an arbitrary time interval to the earliest interval after zero where the system dynamics are equivalent. Note that when $t_b - t_a \leq T_{\text{off}}- \tau$, the earliest equivalent map is within the offline computation interval, i.e. $\mathcal{M}_{\tau}([t_a,t_b]) \subseteq [0,T_{\text{off}}]$. Moreover, let $\mathcal{M}_{\tau}^{\text{i}}(t_a,t_b)=\min(\mathcal{M}_{\tau}([t_{a},t_{b}]))$ and $\mathcal{M}_{\tau}^{\text{f}}(t_a,t_b)=\max(\mathcal{M}_{\tau}([t_{a},t_{b}]))$ denote the initial and final times in the earliest equivalent interval respectively.
These functions will be used throughout \ssminor{Sec.} \ref{sec:replanning-method-periodic} and \ref{sec:replanning-analysis-periodic}.

Just like \ssminor{Sec.} \ref{sec:replanning}, we assume the constraints $\mathcal{C}(t)$ for $t \in [0, T_{\text{run}})$ are updated online at time instances $\{t_i^{\text{updated}}\}_{i=0}^{N_{\text{update}}}$, satisfying \eqref{eqn:assume-obstacles-updated-finite}.

\subsection{Method for Replanning and Tracking for Periodic Systems} \label{sec:replanning-method-periodic}

Our method for real-time trajectory replanning and tracking for periodic systems subject to disturbances $d \in \mathcal{D}$ over a bounded time interval $t \in [0,T_{\text{run}}]$ is summarised in Alg. \ref{alg:periodic-method}. The logic in Alg. \ref{alg:periodic-method} \ssupdated{is similar to Alg.~\ref{alg:finite-interval-method} in the sense that they both involve repeated planning and tracking, as illustrated in Fig.~\ref{fig:overview-replanning-tracking}, but the concrete steps are quite different. Rather than describing every line, we instead bring attention to these differences.}

\begin{algorithm}
\caption{Safe Trajectory Replanning and Tracking for Periodic Systems}
\begin{algorithmic}[1]
    \STATE \textbf{Input:} $L \in \mathbb{R}^{n_r \times n_s}$, $M \in \mathbb{R}^{n_r \times n_p}$, $T_{\text{off}} > \tau$, $f(t,s,u_s,d) \text{ for } t \in [0,T_{\text{off}}]$, $h(p,d)$, $\mathcal{G} \subseteq \mathcal{S}$, $T_{\text{run}} \geq 0$, $T' \leq T_{\text{off}} - \tau$
    \STATE Compute $V(r,t), t \in [0,T_{\text{off}}]$ and $u_s^*(r,t), t \in [0,T_{\text{off}}]$ using OfflineStage$(L,M,C,f,h)$ from Alg. \ref{alg:offline-stage}
    \STATE $k \leftarrow -1$, $t_{-1} \leftarrow 0$
\FOR{$t \in [0,T_{\text{run}})$}
    \STATE Measure current tracking system state $s$ from \eqref{eqn:time-varying-tracking-model}
    \IF{$s \in \mathcal{G}$}
        \STATE Return (goal reached)
    \ENDIF
    \STATE Update constraints $\mathcal{C}_{\text{curr}} \leftarrow \mathcal{C}(t)$
    \IF{($C_{\text{curr}}$ has changed) OR (choose to replan at time $t$) 
    
    OR ($t - t_k \geq  T'$)}
        \STATE $k \leftarrow k + 1$
        \STATE $(t_{k},s_k) \leftarrow (t,s)$
        \STATE $(t_{\text{i}},t_{\text{f}}) \leftarrow (\mathcal{M}^{\text{i}}_{\tau}(t_k,t_k + T'), \mathcal{M}^{\text{f}}_{\tau}(t_k,t_k + T'))$
        \STATE Choose $c_k \geq \underbar{V}(s_k,t_{\text{i}})$ from \eqref{eqn:minimum-level-value-function}
        \STATE Choose $p_k \in \mathcal{T}_p(s_k,t_{\text{i}};c_k)$ from \eqref{eqn:planning-tube}
        \STATE $\hat{\xi}^{\mathcal{M}}_{h,k}(t'),t' \in [t_{\text{i}},t_{\text{f}}] \leftarrow \text{PlanTrajectoryOverInterval}$
        
        $ ( V, h, t_{\text{i}}, s_k, t_{\text{f}}, \mathcal{C}_{\text{curr}}, \mathcal{G}, c)$ from Alg. \ref{alg:planning-bounded-interval}
        \STATE $\hat{\xi}_{h,k}(t') \leftarrow \hat{\xi}^{\mathcal{M}}_{h,k}(\mathcal{M}^{\text{f}}_{\tau}(t_k,t')),t' \in [t_k,t_k + T']$
    \ENDIF
    \STATE $t_{\text{c}} \leftarrow \mathcal{M}^{\text{f}}_{\tau}([t_k,t])$
    \STATE $\hat{\xi}_h^R(t) \leftarrow \hat{\xi}_{h,k}(t)$
    \STATE Compute $u_s \leftarrow u_s^*(Ls-M\hat{\xi}_h^R(t), t_c)$ following \eqref{eqn:optimal-tracking-controller}
    \STATE Apply $u_s$ as control input to tracking system \eqref{eqn:time-varying-tracking-model}
    \ENDFOR
\end{algorithmic} \label{alg:periodic-method}
\end{algorithm}

Firstly, we allow for $T_{\text{run}} \geq 0$, enabling task horizons longer than the offline computation interval $T_{\text{off}}$. Moreover, there is a new input $T' < T_{\text{off}} - \tau$, which is the \textit{planning horizon} --- that is, the time horizon over which trajectory planning occurs, which ensures that the interval over which trajectories are planned can be successfully mapped back to the offline computation horizon. Subsequently, in line 10, we force replanning when $t-t_k \geq T'$. In line 13, $(t_{\text{i}},t_{\text{f}})$ are computed. They correspond to the initial and final time after mapping the interval $[t_k, t_k + T']$ into the offline computation horizon. The initial time $t_i$ is subsequently used to specify the minimum value level $\underbar{V}(s_k,t_{\text{i}})$ in line 14, and the initial planning system sublevel set $\mathcal{T}_p(s_k,t_{\text{i}};c_k)$ in line 15. \textit{PlanTrajectoryOverInterval} is applied to generate a trajectory $\hat{\xi}^{\mathcal{M}}_{h,k}(t')$ over the remapped interval $t' \in [t_{\text{i}},t_{\text{f}}]$ in line 16. The planned trajectory is then mapped back to the original interval, yielding $\hat{\xi}_{h,k}(t'), t' \in [t_k, t_k + T']$ in line 17.

The final differences are lines 19, 20, and 21. The current time $t$ is mapped back to the offline interval to obtain $t_c$ in line 19, the planned trajectory in the offline interval $\hat{\xi}^{\mathcal{M}}_{h,k}(t_c)$ is used to obtain the current state of the replanned trajectory $\hat{\xi}_h^R(t)$ in line 20, and these are subsequently used to compute the control in line 21.

All of the changes described between Alg. \ref{alg:finite-interval-method} and Alg. \ref{alg:periodic-method} address the need for objects related to the offline computation --- i.e. the value function $V$ \eqref{eqn:value-function} or optimal tracking controller $u_s^*$ \eqref{eqn:optimal-tracking-controller} ---  to have their times mapped back to the offline interval where the system dynamics are equivalent. Making this change allows us to overcome the limitations in \ssminor{Sec.} \ref{sec:replanning} and deal with the case where $T_{\text{run}} > T_{\text{off}}$ for periodic systems. Aside from this, Alg. \ref{alg:periodic-method} inherits the same properties as Alg. \ref{alg:finite-interval-method}, including the requirement for careful design of the replanning criteria in line 10, flexibility to choose $c_k$ and $p_k$ in lines 14 and 15, and the capability of switching between the optimal tracking controller and an arbitrary controller in line 21.

\subsection{\mbox{Theoretical Guarantees for Replanning and Tracking }in Periodic Systems} 
\label{sec:replanning-analysis-periodic}
In this section, we theoretically analyze the usage of Alg. \ref{alg:periodic-method} for controlling the system \eqref{eqn:time-varying-tracking-model} with periodic dynamics \eqref{eqn:periodic-tracking} by providing Cor. \ref{cor:replanning-periodic}. Our analysis progresses similarly to \ssminor{Sec.} \ref{sec:replanning-analysis}. Before describing this result, we introduce the required mathematical objects.

Let $K \in \mathbb{N}$ denote the number of times trajectory replanning occurs. Consider a sequence of planning system states $\{ p_k \}_{k=0}^{K-1} \subseteq \mathcal{P}$, and a sequence of increasing time instances $\{ t_k \}_{k=0}^K$ where replanning occurs, satisfying $t_0 = 0$, $t_K = T_{\text{run}}$. Assume $\{t_i^{\text{update}}\}_{i=0}^{N_{\text{update}}} \subseteq \{t_k\}_{k=0}^K$, such that replanning occurs whenever constraints are updated. We define the \textit{replanned trajectory} $\xi_h^R:[0,T_{\text{run}}) \rightarrow \mathcal{P}$ as the concatenation of trajectories defined on the sub-intervals $[t_k, t_{k+1})$ for $0 \leq k \leq K-1$, corresponding to the worst-case planned trajectory \eqref{eqn:worst-planning-system-trajectory} over the interval $[\mathcal{M}^{\text{i}}_{\tau}(t_k,t_{k+1}),\mathcal{M}^{\text{f}}_{\tau}(t_k,t_{k+1}))$ with initial planning system state $p_k$ and tracking system state $s_k$:
\begin{align}
    &\xi_h^R(t) :=
    \xi_h^*(\mathcal{M}^{\text{f}}_{\tau}(t_k,t); p_k, s_k, \mathcal{M}^{\text{i}}_{\tau}(t_k,t)) , \label{eqn:replanned-trajectory-periodic} \\
    &\quad t \in [t_k, t_{k+1}), \quad 0 \leq k \leq K-1. \nonumber
\end{align}
Here, $s_k$ is recursively defined as
\begin{equation}
    s_k := \xi_f^*(\mathcal{M}^{\text{f}}_{\tau}(t_{k-1},t_k) ; s_{k-1}, p_{k-1}, \mathcal{M}^{\text{i}}_{\tau}(t_{k-1},t_k)).
\end{equation}
and $s_0$ is the state of the tracking system at time $t=0$.

Next, define the \textit{replanned tracking system controls} $u_s^R:[0,T_{\text{run}}) \rightarrow \mathcal{U}_s$ as the concatenation of the optimal tracking system controls \eqref{eqn:optimal-tracking-controls}, computed over the remapped intervals in the offline computation horizon that correspond to each sub-interval between replanning time instances:
\begin{align}
    &u_s^R(t) := 
    u_s^*(\mathcal{M}^{\text{f}}_{\tau}(t_k,t); Ls_k - Mp_k, \mathcal{M}^{\text{i}}_{\tau}(t_k,t)), \\
    &\quad t \in [t_k, t_{k+1}), \ 0 \leq k \leq K-1.
\end{align}
Moreover, \ssupdatedd{we define the \textit{replanned disturbances} as} the concatenation of the worst-case disturbances \eqref{eqn:worst-disturbances} computed over the remapped intervals in the offline computation horizon that correspond to each sub-interval between replanning time instances:
\begin{align}
    &d^R(t) := 
    d^*(\mathcal{M}^{\text{f}}_{\tau}(t_k,t); Ls_k - Mp_k, \mathcal{M}^{\text{i}}_{\tau}(t_k,t)), \\
    &\quad t \in [t_k, t_{k+1}), \ 0 \leq k \leq K-1.
\end{align}

Finally, the \textit{replanned tracking system trajectory} is defined as the trajectory of the tracking system starting from state $s_0$ at time $0$ under the replanned tracking system controls \eqref{eqn:replanned-tracking-controls} and disturbances \eqref{eqn:replanned-disturbances}:
\begin{align}
    &\xi_f^R(t) := \xi_f(t; s_0, 0, u_s^R(\cdot), d^R(\cdot)), \quad t \in [0,T_{\text{run}}).
\end{align}
For all time $t \in [0,T_{\text{run}})$, $\xi_h^R(t)$,
$u_s^R(t)$, and $d^R(t)$, are uniquely determined by $s_0$, $\{p_k \}_{k=0}^{k'(t)}$ and $\{t_k \}_{k=0}^{k'(t)}$, where $k'(t) = \max \{ k \in \mathbb{N}_0 : t_k \leq t \}$.

With these mathematical objects defined, we are now ready to provide Cor. \ref{cor:replanning-periodic}, which says that, for each replanning step $k \in \{0,\hdots,K-1\}$ and associated time interval $[t_k,t_{k+1})$, \ssupdated{as} long as this time interval can be mapped back to the offline computation interval \eqref{eqn:successfully-mapped-offline-interval}, the corresponding value level $c_k$ is sufficiently large \eqref{eqn:minimum-level-value-function-replanning-periodic} and the initial planning system state $p_k$ on this interval is inside the planning system sublevel set $\mathcal{T}_p(\xi_f^R(t_k),\mathcal{M}^{\text{i}}_{\tau}(t_k,t_{k+1});c_k)$ \eqref{eqn:viable-planning-states-replanning-periodic}, then at any time in this interval that the replanned trajectory $\xi_h^R(t)$ is within the planning system constraints $\mathcal{F}_{\mathcal{C}(t_k)}(\mathcal{M}^{ \text{f} }_{\tau}(t_k,t);c_k)$ (goal $\mathcal{F}_{\mathcal{G}}(\mathcal{M}^{ \text{f} }_{\tau}(t_k,t);c_k)$), the replanned tracking system trajectory $\xi_f^R(t)$ satisfies the constraints $\mathcal{C}(t)$ \eqref{eqn:replanning-obstacle-avoidance-periodic} (goal $\mathcal{G}$ \eqref{eqn:replanning-goal-reaching-periodic}).

\begin{corollary} \label{cor:replanning-periodic}
Let $\{ c_k \}_{k=0}^{K-1} \subseteq \mathbb{R}$ be a sequence of value levels. Suppose for all $k \in \{ 0, \hdots, K-1 \}$\ssminor{, the following hold:
\begin{align}
    &t_{k+1} - t_k \leq T_{\text{off}} - \tau, \label{eqn:successfully-mapped-offline-interval} \\
    &c_k \geq \underbar{V}(\xi_f^R(t_k),\mathcal{M}^{\text{i}}_{\tau}(t_k,t_{k+1})), \label{eqn:minimum-level-value-function-replanning-periodic} \\
    & p_k \in \mathcal{T}_p(\xi_f^R(t_k),\mathcal{M}^{\text{i}}_{\tau}(t_k,t_{k+1});c_k). \label{eqn:viable-planning-states-replanning-periodic}
\end{align}}
Then, for any $k \in \{0, \hdots, K-1\}$ and any $t \in [t_{k},t_{k+1})$\ssminor{, we have:}
\begin{align}
    &\xi_h^R(t) \in \mathcal{F}_{\mathcal{C}(t_k)}(\mathcal{M}^{ \text{f} }_{\tau}(t_k,t);c_k) \implies \xi_f^R(t) \in \mathcal{C}(t),  \label{eqn:replanning-obstacle-avoidance-periodic} \\
    &\xi_h^R(t) \in \mathcal{G}_p(\mathcal{M}^{ \text{f} }_{\tau}(t_k,t);c_k) \implies \xi_f^R(t) \in \mathcal{G}. \label{eqn:replanning-goal-reaching-periodic}
\end{align}
\end{corollary}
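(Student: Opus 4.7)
The plan is to reduce Corollary~\ref{cor:replanning-periodic} to Theorem~\ref{theorem:planning-theorem} applied sub-interval-by-sub-interval, analogously to how Corollary~\ref{cor:replanning-bounded-interval} follows from Theorem~\ref{theorem:planning-theorem}. The essential new ingredient is exploiting the periodicity \eqref{eqn:periodic-tracking} to transfer each sub-interval $[t_k, t_{k+1})$ to an equivalent interval inside the offline computation horizon $[0, T_{\text{off}}]$ via the map $\mathcal{M}_\tau$, so that the precomputed $V$ and $u_s^*$ can be legitimately used.

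First I would fix an arbitrary $k \in \{0,\ldots,K-1\}$ and examine the remapped interval $[\mathcal{M}_\tau^{\text{i}}(t_k,t_{k+1}), \mathcal{M}_\tau^{\text{f}}(t_k,t_{k+1})]$. Condition \eqref{eqn:successfully-mapped-offline-interval} guarantees this interval lies inside $[0, T_{\text{off}}]$, since its left endpoint is in $[0,\tau)$ by definition of $\mathcal{M}_\tau$ and its length equals $t_{k+1} - t_k \leq T_{\text{off}} - \tau$. By periodicity \eqref{eqn:periodic-tracking}, for every $t \in [t_k, t_{k+1}]$ and all $s, u_s, d$, one has $f(t, s, u_s, d) = f(\mathcal{M}_\tau^{\text{f}}(t_k, t), s, u_s, d)$. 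I would then propagate this shift-invariance through the relative dynamics $g$, the value function $V$, the optimal tracking controller $u_s^*$, and the non-anticipative strategy pair $(\gamma_p^*, \gamma_d^*)$. Combined with the definitions \eqref{eqn:replanned-trajectory-periodic}--\eqref{eqn:replanned-disturbances}, this establishes that the evolution of the tracking, planning, and relative states on $[t_k, t_{k+1})$ is indistinguishable from applying Proposition~\ref{prop:remain-in-teb} directly on the remapped interval with initial states $s_k$ and $p_k$.

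With shift-invariance in hand, I would invoke Theorem~\ref{theorem:planning-theorem} on the remapped interval with planning start/end times $\mathcal{M}_\tau^{\text{i}}(t_k,t_{k+1})$ and $\mathcal{M}_\tau^{\text{f}}(t_k,t_{k+1})$, initial tracking state $s_k = \xi_f^R(t_k)$, initial planning state $p_k$, value level $c_k$, and static constraint set $\mathcal{C}(t_k)$ (which by the inclusion $\{t_i^{\text{update}}\}_{i=0}^{N_{\text{update}}} \subseteq \{t_k\}_{k=0}^K$ together with \eqref{eqn:assume-obstacles-updated-finite} coincides with $\mathcal{C}(t)$ for every $t \in [t_k, t_{k+1})$). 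Hypotheses \eqref{eqn:minimum-level-value-function-replanning-periodic} and \eqref{eqn:viable-planning-states-replanning-periodic} are precisely the hypotheses of Theorem~\ref{theorem:planning-theorem} under this identification. Substituting the remapped time $t' = \mathcal{M}_\tau^{\text{f}}(t_k, t)$ back to $t \in [t_k, t_{k+1})$ then converts the conclusions of Theorem~\ref{theorem:planning-theorem} into \eqref{eqn:replanning-obstacle-avoidance-periodic} and \eqref{eqn:replanning-goal-reaching-periodic}. Since $k$ was arbitrary and the sub-intervals tile $[0, T_{\text{run}})$, the corollary follows.

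The main obstacle will be the bookkeeping in the first step: rigorously showing that every object constructed from $f$ inherits the shift-invariance, so that a trajectory built via the remapped times truly coincides, frame by frame, with what Proposition~\ref{prop:remain-in-teb} would produce on $[t_k, t_{k+1})$. Once this is nailed down, the remainder is a transcription of the argument used for Corollary~\ref{cor:replanning-bounded-interval}, with $t_k$ replaced by $\mathcal{M}_\tau^{\text{i}}(t_k,t_{k+1})$ and $t$ by $\mathcal{M}_\tau^{\text{f}}(t_k,t)$ throughout.
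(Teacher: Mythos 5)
Your proposal is correct and follows essentially the same route as the paper: the paper's proof likewise fixes $k$, uses \eqref{eqn:successfully-mapped-offline-interval} to place the remapped interval inside $[0,T_{\text{off}}]$, identifies $\xi_h^R(t)$ and $\xi_f^R(t)$ with $\xi_h^*$ and $\xi_f^*$ evaluated at $\mathcal{M}^{\text{f}}_{\tau}(t_k,t)$ with start time $\mathcal{M}^{\text{i}}_{\tau}(t_k,t)$, and then applies Thm.~\ref{theorem:planning-theorem} on that remapped interval with $\mathcal{C}(t_k)=\mathcal{C}(t)$. The shift-invariance bookkeeping you flag as the main obstacle is handled in the paper largely by definition, since $\xi_h^R$, $u_s^R$, $d^R$ are constructed directly from the remapped times.
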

\begin{remark} \label{remark:replanning-periodic-arbitrary}
    Similar to Cor. \ref{cor:replanning-bounded-interval}, Cor. \ref{cor:replanning-periodic} provides sufficient conditions for collision avoidance and/or goal-reaching over the task horizon supposing that within each sub-interval $[t_k,t_{k+1})$ for $k \in \{ 0, \hdots, K-1 \}$, optimal tracking system controls $u_s^*$ \eqref{eqn:optimal-tracking-controls} and worst-case disturbances $d^*$ \eqref{eqn:worst-disturbances} are applied to \eqref{eqn:time-varying-tracking-model}, and the worst-case planning system controls $u_p^*$ \eqref{eqn:worst-planning-controls} are applied to \eqref{eqn:time-varying-planning-model}, after accounting for the reinitialization of the planning system state at $p_k$. However, these optimal controls and disturbances are solved over remapped intervals inside the offline computation horizon, as opposed to the true sub-interval. Since the dynamics of the tracking system are equivalent across both due to the periodic nature of the system, this extra complexity allows us to provide guarantees for $T_{\text{run}} > T_{\text{off}}$, without requiring the solution of the HJ differential game beyond $T_{\text{off}}$. The proof of Cor. \ref{cor:replanning-periodic} naturally follows by applying Thm. \ref{theorem:planning-theorem} for each remapped interval corresponding to each sub-interval of the task horizon. Thus, for the same reasons pointed out in Remark \ref{remark:planning-arbitrary}, the results in Cor. \ref{cor:replanning-periodic} also hold when the optimal tracking controller \eqref{eqn:optimal-tracking-controller} for each remapped interval, and arbitrary planning system controls and disturbances taking values in $\mathcal{U}_p$ and $\mathcal{D}$ are applied in each sub-interval of the task horizon.
\end{remark}

The connections between Alg. \ref{alg:periodic-method} and the guarantees in Cor. \ref{cor:replanning-periodic} follow similarly to that of Alg. \ref{alg:finite-interval-method} and Cor. \ref{cor:replanning-bounded-interval}. In particular, the planning system controls for the replanned trajectories take values in $\mathcal{U}_p$, the tracking system controls are generated by the optimal tracking controller (line 21), and the disturbances take values in $\mathcal{D}$; therefore, referring to Remark \ref{remark:replanning-periodic-arbitrary}, the guarantees in Cor. \ref{cor:replanning-periodic} \ssupdatedd{apply} to the system controlled by Alg. \ref{alg:periodic-method} \ssupdated{as} long as the conditions in Cor. \ref{cor:replanning-periodic} hold after replacing $\xi_h^R(t)$ and $\xi_f^R(t_k)$ with $\hat{\xi}_h^R(t)$ and $s_k$ from Alg. \ref{alg:periodic-method}. 
This can be verified by comparing line 14 in Alg. \ref{alg:finite-interval-method} to \eqref{eqn:minimum-level-value-function-replanning-periodic}, line 15 to \eqref{eqn:viable-planning-states-replanning-periodic}, and $\hat{\xi}_f^R(t)=\hat{\xi}_{h,k}(t)$ (which satisfies C3 and C4 via line 16) to \eqref{eqn:replanning-obstacle-avoidance-periodic} and \eqref{eqn:replanning-goal-reaching-periodic}.
Following the same logic as \ssminor{Sec.} \ref{sec:replanning-analysis}, \eqref{eqn:replanning-obstacle-avoidance-periodic} and \eqref{eqn:replanning-goal-reaching-periodic} can therefore be applied to find that constraint satisfaction and (optionally) \ssupdatedd{goal-reaching} are achieved between replanning time instances, implying Alg. \ref{alg:periodic-method} achieves these guarantees over the entire task horizon.

\section{Example: AUV Subject to Wave Disturbances} \label{sec:auv-example}
\ssreview{This section explains how the methods in this paper can be implemented on a numerical example of an AUV, and showcases its performance.
Different ways of modeling an AUV subject to wave disturbances are provided in Sec.~\ref{sec:auv-model}, and in Sec.~\ref{sec:case-study-offline-computation} we describe how to interface these models with the OfflineStage (Alg.~\ref{alg:offline-stage}) component of our framework. In Sec.~\ref{sec:case-study-planning}, a concrete implementation of the planning procedure in Alg.~\ref{alg:planning-bounded-interval} is described, and its benefits are highlighted. Sec.~\ref{sec:case-study-replanning} showcases how the combined replanning and tracking procedure in Alg.~\ref{alg:finite-interval-method} can successfully be used to guide the AUV through an \textit{a priori} unknown environment, and how flexible replanning leads to performance improvements. 
Since Alg.~\ref{alg:finite-interval-method} requires $T_{\text{run}} \leq T_{\text{off}}$, in Sec.~\ref{sec:case-study-replanning} we show that Alg.~\ref{alg:periodic-method} can be applied to the AUV when $T_{\text{run}} \gg T_{\text{off}}$ by exploiting periodicity.}

\subsection{Modeling the AUV Subject to Wave Disturbances} \label{sec:auv-model}
In this section, we first provide a time-varying, state-dependent system model of an AUV in plane progressive waves derived based on \cite{Battista2015UnderwaterWaves}. This model is lifted from \cite[eq. 9]{Siriya2020Safety-GuaranteedWaves}, which we refer to as the tracking system model for Case 1. Following this, we derive two approximations of this tracking system model which differ based on the fidelity of the wave disturbance model. In particular, Case 2 corresponds to a time-varying, state-independent wave approximation, and Case 3 is a time- and state-independent approximation. These cases will be used to 
 demonstrate the benefits and limitations of including or discarding model information in our methods.

\subsubsection{Case 1: Tracking System for Time-Varying, State-Dependent Wave Model} \label{sec:auv-model-case-1}
Consider the following model $\dot{s} = \underbar{f}(s,u_s,d^{\text{nom}},\ssreview{d^{\text{wave}}})$ of an AUV subject to \ssreview{both} a nominal \ssreview{and wave} disturbance:
\begin{align}
    \dot{x} &= u_r + \ssreview{d^{\text{wave}}_x} + d_x \nonumber\\
    \dot{z} &= w_r + \ssreview{d^{\text{wave}}_z} + d_z \nonumber\\
    \dot{u}_r &= (m-X_{\dot{u}})^{-1}((\bar{m} - m)\ssreview{d^{\text{wave}}_u} \nonumber\\
    &\quad - (X_u + X_{\abs{u} u}\abs{u_r})u_r +  T_A) + d_u \nonumber\\
    \dot{w}_r &= (m-Z_{\dot{w}})^{-1} ((\bar{m} - m)\ssreview{d^{\text{wave}}_z}-(-g(m-\bar{m}))\\
    &\quad - (Z_w + Z_{\abs{w}w}\abs{w_r})w_r + T_B) + d_w. \label{eqn:auv-dynamics}
\end{align}
Here, $s = (x,z,u_r,w_r)$ is the state, where $(x,z)$ is the AUV position ($z$ pointing downwards) and $(u_r,w_r)$ is the relative velocity between the AUV and water flow. The control input is $u_s = (T_A,T_B)$, \ssreview{$d^{\text{wave}}=(d^{\text{wave}}_x,d^{\text{wave}}_z,d^{\text{wave}}_u,d^{\text{wave}}_z)$ is the wave disturbances,} and $d^{\text{nom}} = (d_x,d_z,d_u,d_w)$ \ssreview{is} other unmodeled, bounded disturbances. Model parameters include the AUV mass $m$, the displaced fluid \ssreview{mass} $\bar{m}$, the additive masses $X_{\dot{u}},Z_{\dot{w}}$, the linear damping factors $X_{u},Z_{u}$, the quadratic damping factors $X_{\abs{u}u},Z_{\abs{w}w}$, and gravity $g$. 

Let $\underbar{d}^{\text{tvsd}}(\ssreview{t,s}) = (W_x(x,z,t), W_z(x,z,t), A_x(x,z,t), \allowbreak A_z(x,z,t))$ represent the plane-progressive wave model, where \ssreview{$(W_x,W_z)$ and $(A_x,A_z)$} are the velocity and acceleration models, given by
\begin{align}
    \begin{aligned}
        \begin{bmatrix} 
            W_x(x,z,t) \\
            W_z(x,z,t)
        \end{bmatrix} = \tilde{A} \omega e^{-kz} \begin{bmatrix}
            \cos(kx-\omega t) \\
            -\sin(kx - \omega t)
        \end{bmatrix}
    \end{aligned},
    \label{eqn:wave-flow-velocity} \\
    \begin{bmatrix}
    A_x(x,z,t) \\
    A_z(x,z,t)
\end{bmatrix} = \tilde{A} \omega^2 e^{-kz} \begin{bmatrix}
        \sin(kx-\omega t) \\
        \cos(kx - \omega t)
    \end{bmatrix}. \label{eqn:wave-flow-acceleration}
\end{align}
Here, $\tilde{A}$ is the wave amplitude, $\omega$ is the wave frequency, and $k$ is the wavenumber. 
\ssreview{Note that $\underbar{d}^{\text{tvsd}}$ is both time-varying and state-dependent. The Case 1 tracking system model is then obtained by choosing the wave disturbance $d^{\textnormal{wave}}$ in the AUV model \eqref{eqn:auv-dynamics} as $\underbar{d}^{\textnormal{tvsd}}(s,t)$, producing a time-varying system model $f(t,s,u_s,d) = \underbar{f}(s,u_s,d^{\text{nom}}, \underbar{d}^{\text{tvsd}}(t,s))$ for $t \in [0,T]$, with $d = d^{\text{nom}}$.} 

\subsubsection{Case 2: Tracking System for Time-Varying, State-Independent Wave Model}  \label{eqn:case-2-model-procedure}
In this case, we consider a time-varying, state-independent approximation of the plane-progressive wave model $\underbar{d}^{\text{tvsd}}(t,s)$ with the following structure:
\begin{align}
    \underbar{d}^{\text{tvsi}}(t,d^{\text{tvsi}}) &= \begin{bmatrix}
        \tilde{A}^{\text{tvsi}}_W \cos(\phi_W-\omega t) + d_{W,x}^{\text{tvsi}}\\
        -\tilde{A}^{\text{tvsi}}_W\sin(\phi_W - \omega t) + d_{W,z}^{\text{tvsi}}\\
        \tilde{A}^{\text{tvsi}}_A\sin(\phi_A-\omega t) + d_{A,x}^{\text{tvsi}}\\
        \tilde{A}^{\text{tvsi}}_A\cos(\phi_A - \omega t) + d_{A,z}^{\text{tvsi}}
        \end{bmatrix}, \label{eqn:dtvsi}
\end{align}
\noeqref{eqn:dtvsi}
where $d^{\text{tvsi}}=(d_{W,x}^{\text{tvsi}}$,$d_{W,z}^{\text{tvsi}}$,$d_{A,x}^{\text{tvsi}}$,$d_{A,z}^{\text{tvsi}})$ are bounded uncertainties satisfying $\abs{d_{W,x}^{\text{tvsi}}},\abs{d_{W,z}^{\text{tvsi}}} \leq D_W^{\text{tvsi}}$ and $\abs{d_{A,x}^{\text{tvsi}}},\abs{d_{A,z}^{\text{tvsi}}} \leq D_A^{\text{tvsi}}$.
Here, $\tilde{A}^{\text{tvsi}}_W,\tilde{A}^{\text{tvsi}}_A,\phi_W,\phi_A,D_W^{\text{tvsi}},D_A^{\text{tvsi}}$ are parameters that need to be appropriately selected so that $\underbar{d}^{\text{tvsd}}(\ssreview{t,s})$ is conservatively approximated by $\underbar{d}^{\text{tvsi}}(t,d^{\text{tvsi}})$ over some bounded region of interest $\Xi \subseteq \mathbb{R}^2$ in the $x$-$z$ space, and time interval $[0,T]$. They are chosen so that for all $(x,z) \in \Xi$ and $t \in [0,T]$,
\begin{align}
    &\begin{bmatrix}
        W_x(x,z,t) \\
        W_z(x,z,t) \\
        A_x(x,z,t) \\
        A_z(x,z,t)
    \end{bmatrix}
    \in \left\{\begin{bmatrix}
        \tilde{A}^{\text{tvsi}}_W\cos(\phi_W-\omega t) \\
        \tilde{A}^{\text{tvsi}}_W(-\sin(\phi_W-\omega t)) \\
        \tilde{A}^{\text{tvsi}}_A\sin(\phi_A-\omega t) \\
        \tilde{A}^{\text{tvsi}}_A\cos(\phi_A-\omega t) 
    \end{bmatrix}\right\} \\
    & \quad \quad \oplus [-D_W^{\text{tvsi}},D_W^{\text{tvsi}}]^2 \times [-D_A^{\text{tvsi}},D_A^{\text{tvsi}}]^2 \label{eqn:case-study-offline-computation-tv-bound}
\end{align}
holds with the smallest $\tilde{A}^{\text{tvsi}}_{W},\tilde{A}^{\text{tvsi}}_{A},D_W^{\text{tvsi}},D_A^{\text{tvsi}} > 0$.

The \ssreview{Case 2} tracking system model is \ssreview{then formulated as the time-varying system} $f(t,s,u_s,d)=\underbar{f}(s,u_s,d^{\text{nom}},\underbar{d}^{\text{tvsi}}(t,d^{\text{tvsi}}))$ for $t \in [0,T]$, with $d = (d^{\text{nom}},d^{\text{tvsi}})$.

\subsubsection{Case 3: Tracking System for Time- and State-Independent Wave Model} \label{eqn:case-3-model-procedure}
For Case 3, the wave model is simply an additive, bounded disturbance $d^{\text{tisi}}=(d_{W,x}^{\text{tisi}},d_{W,z}^{\text{tisi}},d_{A,x}^{\text{tisi}},d_{A,z}^{\text{tisi}})$, satisfying $\abs{d_{W,x}^{\text{tisi}}},\abs{d_{W,z}^{\text{tisi}}} \leq D_W^{\text{tisi}}$ and $\abs{d_{A,x}^{\text{tisi}}},\abs{d_{A,z}^{\text{tisi}}} \leq D_A^{\text{tisi}}$. The parameters $ D_W^{\text{tisi}}, D_A^{\text{tisi}}$ are \ssreview{chosen as the smallest values capturing} uncertainty associated with moving from \ssreview{$\underline{d}^{\textnormal{tvsd}}$} to a time-invariant approximation, i.e. \ssreview{$\underline{d}^{\textnormal{tvsd}}(t,(x,z,u_r,w_r)) \in [-D_W^{\text{tisi}},D_W^{\text{tisi}}]^2 \times [-D_A^{\text{tisi}},D_A^{\text{tisi}}]^2$ must be satisfied for all $(x,z,u_r,w_r) \in \Xi \times \mathbb{R}^2$} and $t \in [0,T]$. The \ssreview{Case~3} tracking system model is then \ssreview{formulated as a time-invariant system $f(s,u_s,d)=\underbar{f}(s,u_s,d^{\text{nom}},d^{\text{tisi}})$ over} $t \in [0,T]$, with $d = (d^{\text{nom}},d^{\text{tisi}})$.

\subsection{Interfacing OfflineStage with the AUV Model} \label{sec:case-study-offline-computation}
In this section, we describe three different ways of selecting \ssreview{the} inputs $L, M, C, f, h$ for OfflineStage based on \ssreview{Cases~1-3 in \ref{sec:auv-model}}. In all cases, we choose the planning system $h$ \ssreview{as} the single integrator $\dot{p} = h(p,u_p)=(\dot{x}_p,\dot{z}_p)=(u_{p,x},u_{p,z})$,  where $p=(x_p,z_p)$ is the planning system state and $u_p=(u_{p,x},u_{p,z})$ is the planning system control.

\subsubsection{Case 1: Full Wave}  \label{sec:case-study-offline-computation-full-wave}
Recall the Case 1 tracking system model $f(\ssreview{t,}s,u_s,d) = \underbar{f}(s,u_s,d^{\text{nom}}, \underbar{d}^{\text{tvsd}}(t,s))$, with $d = d^{\text{nom}}$. Here, we choose $L = \begin{bmatrix} I_2 & 0_{2 \times 2} & I_2 \\ 0_{2 \times 2} & I_2 & \ssupdated{0_{2 \times 2}} \end{bmatrix}^{\top}$ and $M = \begin{bmatrix} I_2 & 0_{2 \times 4} \end{bmatrix}^{\top}$, which following \eqref{eqn:time-varying-relative-model} formulates a 6D relative system with states $r=(x_{\alpha},z_{\alpha},u_r,w_r,x,z)$, where $(x_{\alpha},z_{\alpha})=(x-x_p,z-z_p)$ is the tracking error between the tracking and planning positions. The position $(x,z)$ is needed in $r$ since $\underbar{d}^{\text{tvsd}}(\ssreview{t,s})$ depends on it, in contrast to \ssupdatedd{Cases} 2 and 3. This produces different value functions, requiring different online planning and tracking procedures to handle this change. The error matrix C is then chosen as $C = \begin{bmatrix} I_2 & 0_{2 \times 4} \end{bmatrix}$.
\ssreview{Moreover, line 3 in Alg.~\ref{alg:offline-stage} was carried out using OptimizedDP \cite{bui2022optimizeddp} due to its 6D support.}

\subsubsection{Case 2 and 3: Time-Varying/Time-Invariant Wave Approximation} \label{sec:case-study-offline-computation-tv-wave}
Recall the Case 2 tracking system model $f(t,s,u_s,d)=\underbar{f}(s,u_s,d^{\text{nom}},\underbar{d}^{\text{tvsi}}(t,d^{\text{tvsi}}))$, with $d = (d^{\text{nom}},d^{\text{tvsi}})$, and the Case 3 model \ssreview{$f(s,u_s,d)=\underbar{f}(s,u_s,d^{\text{nom}},d^{\text{tisi}})$}, with $d = (d^{\text{nom}},d^{\text{tisi}})$. For both cases, we choose $L = I_4$ and $M = \begin{bmatrix} I_2 & 0_{2 \times 2} \end{bmatrix}^{\top}$, which following \eqref{eqn:time-varying-relative-model} formulates a 4D relative system with states $r=(x_{\alpha},z_{\alpha},u_r,w_r)$. Unlike Case 1, $(x,z)$ is not needed in $r$ since both $\underbar{d}^{\text{tvsi}}(t,d^{\text{tvsi}})$ and $d^{\text{tisi}}$ are state-independent. $C$ is then chosen as $C = \begin{bmatrix} I_2 & 0_{2 \times 2} \end{bmatrix}$. Note that the offline computation of Case 3 can be viewed as a \ssupdated{naive} extension of the offline procedure in FaSTrack to time-varying systems by uniformly bounding the waves.

\begin{remark} \label{remark:waves-adversarial}
    \ssreview{Although we pointed out in Sec.~\ref{sec:hj-game} that the consideration of worst-case planning system \ssupdatedd{behavior} is usually conservative for most problems, we find it is reasonable in this example. Here, it corresponds to the tracking system $f$ trying to catch up to a planning system $h$ that opposes the direction of the waves in \eqref{eqn:wave-flow-velocity}, which can occur frequently due to the periodic nature of these waves.}
\end{remark}

\subsection{Trajectory Planning for the AUV} \label{sec:case-study-planning}

Although Alg. \ref{alg:planning-bounded-interval} abstractly describes the planning procedure, how to implement the procedure remains unclear since we need to compute objects that are defined in continuous time and space. In this section, we describe example implementations of Alg. \ref{alg:planning-bounded-interval} for the problem of navigating an AUV subject to plane-progressive wave disturbances from an initial state $s_0 \in \mathcal{S}$ to a goal region $\mathcal{G}$ while satisfying constraints $\mathcal{C}$. These implementations are based on the Case 1-3 AUV models from \ref{sec:auv-model} and were originally contained in our conference paper \cite{Siriya2020Safety-GuaranteedWaves}. 
\ssreview{Note that the objects computed in these implementations are discrete time and space approximations of the continuous objects in Alg.~\ref{alg:planning-bounded-interval}, and thus not directly compatible with the safety guarantees of this work. Nevertheless, they are practically useful, and we leave considering the effect of approximation errors on guarantees \ssupdatedd{for} future work.}
We assume \ssreview{the AUV is a point mass for simplicity}, and that \ssreview{there are only static} obstacles $\mathbb{O}\subseteq \mathbb{R}^2$ in the $x \text{-}z$ plane, such \ssreview{that we have a static constraint} $\mathcal{C}=\mathbb{O}^{\mathsf{c}} \times \mathbb{R}^2$. \ssreview{However, considering the AUV volume is important for collision avoidance in reality, and it can easily be accounted for using the constraint $\mathcal{C}=(\mathbb{O} \oplus (-\mathbb{V}))^{\mathsf{c}} \times \mathbb{R}^2$, where $\mathbb{V}$ is the AUV volume centered at the tracking system position.} We also assume $\Xi^{\mathsf{c}} \subseteq \mathbb{O}$ \ssminor{since} the wave models are only accurate for $(x,z) \in \Xi$\ssminor{, and} that the goal is determined by position, i.e. $\mathcal{G}= \mathbb{G}\times \mathbb{R}^2 \subseteq \mathcal{S}$, where $\mathbb{G}\subseteq \mathbb{R}^2$ is the goal region in the $x\text{-}z$ plane.
In 
\ref{sec:case-study-planning-implementation}, we describe how to approximately implement the algorithm for the models based on Cases 1-3. In \ref{sec:case-study-planning-simulation}, we briefly comment on the differences.

\subsubsection{Implementing Planning Algorithms} \label{sec:case-study-planning-implementation}

We first describe the implementation of Alg. \ref{alg:planning-bounded-interval} for Case 1, followed by 2 and 3 simultaneously. Throughout all examples, suppose $T_{\text{off}} \in [0,T]$, $t_i \in [0,T_{\text{off}}]$, and $t_f \in [t_i,T_{\text{off}}]$. Moreover, suppose $\mathcal{C}$ and $\mathcal{G}$ have been provided, $h$ is the single integrator system from \ref{sec:case-study-offline-computation}, and both $c \geq \underbar{V}(s,t)$ and $p \in \mathcal{T}_p(s,t_{\text{i}};c)$ are given.

\paragraph{Case 1: Full Wave Model}
Suppose Alg. \ref{alg:offline-stage} has been run for Case 1 based on \ref{sec:case-study-offline-computation-full-wave}. Let $\oplus$ denote the Minkowski sum. To implement line 2 of Alg. \ref{alg:planning-bounded-interval}, the planning system constraints can be expressed as $\mathcal{C}_p(t) = \mathcal{F}_{\mathcal{C}}(t;c) = \mathcal{O}_p^{\mathsf{c}}(t)$, where $\mathcal{O}_p$ denotes the \textit{planning system obstacles} and is defined using the set avoidance map from \ssminor{Sec.} \ref{sec:sets} as
\begin{align}
    \mathcal{O}_p(t) = \mathcal{A}_{\mathcal{C}^{\mathsf{c}}}(t;c) &\ssminor{=\bigcup_{(x,y) \in \mathbb{O}} \bigcup_{(u_r,w_r) \in \mathbb{R}^2} \mathcal{T}_p(s,t;c)} \\
    &=  \bigcup_{(x,z) \in \mathbb{O} } \{(x,z)\} \oplus (- \mathcal{B}_e^{\text{full}}(x,z,t;c) ) 
\end{align}
for $t \in [t_{\text{i}},t_{\text{f}}]$, where $\mathcal{B}_e^{\text{full}}(x,z,t;c) := \{ (x_{\alpha},z_{\alpha}) :  V((x_{\alpha},z_{\alpha},u_r,w_r,x,z),t) \leq c \text{ for some } u_r,w_r \}$ denotes the tracking error bound (TEB) for the full wave model. We can approximately compute $\mathcal{O}_p(t)$ by setting up a grid of discrete cells representing $x,z$ points to iterate over, and following Alg. \ref{alg:computing-planner-obstacles}. In Alg. \ref{alg:computing-planner-obstacles}, $k$ refers to the time step, $N$ is the horizon length, $t_k=t_{\text{i}}+kT_s$ is the discretized time, and $T_s = (t_{\text{f}}-t_{\text{i}})/N$ is the time discretization period. Note that $\mathcal{C}_p(t_k)$ could be obtained by taking the complement of the output $\mathcal{O}_p(t_k)$. We will keep $\mathcal{O}_p(t_k)$ since it is a more convenient representation in this example.

\begin{algorithm}
\caption{Computing Planning System Obstacles for Case 1 AUV Example} \label{alg:computing-planner-obstacles}
\begin{algorithmic}[1]
    \STATE \textbf{Input}: $\mathbb{O}$, $N$, $\mathcal{B}_{e}^{\text{full}}(x,z,t_k;c) \text{ for } k \in \{0,\hdots,N\}$, $t_{\text{i}}$, $T_s$
    \FOR{$k \in \{0,\hdots,N\}$}
        \STATE $t_k = t_{\text{i}} + kT_s$
        \STATE $\mathcal{O}_p(t_k) \leftarrow \emptyset$
        \FOR{$(x,z) \in \mathbb{O}$}
             \FOR{$(x_{\alpha},z_{\alpha})\in \mathcal{B}_{e}^{\text{full}}(x,z,t_k;c)$}
                \STATE $\mathcal{O}_p(t_k) \leftarrow \mathcal{O}_p(t_k) \cup \{(x,z)-(x_{\alpha},z_{\alpha}) \}$
            \ENDFOR
        \ENDFOR
    \ENDFOR
    \STATE \textbf{Output}: $\mathcal{O}_p(t_k) \text{ for } k \in \{0,\hdots,N\}$
\end{algorithmic}
\end{algorithm}

Similarly, $\mathcal{G}_p(t), \ t \in [t_{\text{i}},t_{\text{f}}]$ can be formulated as $\mathcal{G}_p(t) = (\bigcup_{(x,z)\in\mathbb{G}^{\mathsf{C}}} \{(x,z)\} \oplus (-\mathcal{B}_{e}^{\text{full}}(x,z,t;c)))^{\mathsf{C}}$, so line 3 of Alg. \ref{alg:planning-bounded-interval} is approximately implemented by inputting $\mathcal{G}$ instead of $\mathcal{O}$ into Alg. \ref{alg:computing-planner-obstacles}, and taking the complement of the output.

Finally, we implement line 4 of Alg. \ref{alg:planning-bounded-interval} by solving a discrete-time trajectory optimization problem in \eqref{eqn:mpc-problem}, where $h_d$ denotes the Euler discretization of $h$ with period $T_s$. Conditions C1-C4 are represented in the constraints of this problem.
\begin{align}
    &\min_{\{p_k\}_{k=0}^{N}, \{ u_{p,k} \}_{k=0}^{N-1}} \sum_{k=0}^{N-1}\big( (p_k-p_{\text{ref}})^{\top} Q (p_k-p_{\text{ref}}) \nonumber \\
    & \quad \quad \quad + u_{p,k}^{\top}R u_{p,k} \big) + (p_N-p_{\text{ref}})^{\top}Q(p_N-p_{\text{ref}})   \\
    &\text{s.t.} \quad p_{k+1} = h_d(p_k,u_{p,k}), \ u_{p,k} \in \mathcal{U}_p, \ p_k \not \in \mathcal{O}_p(t_k), \\
    & \quad \quad p_0 = p, \ p_N \in \mathcal{G}_p(t_N). \label{eqn:mpc-problem}
\end{align}
Here, $\{p_k\}_{k=0}^N$ and $\{u_{p,k}\}_{k=0}^{N-1}$ are planning system states and controls over the horizon respectively, and $p_{\text{ref}}\in \mathbb{R}^2$ is a reference planning system state chosen to satisfy $p_{\text{ref}} \in \mathcal{G}_p(t_N)$. The weight matrices $Q$ and $R$ penalize the tracking error and control inputs. The obstacle avoidance constraint $p_k \not \in \mathcal{O}_p(t_k)$ can be practically implemented using optimization-based collision avoidance constraints \cite{Zhang2017Optimization}.

\ssreview{Note} that \ssreview{here}, the computation of the planning system obstacles and \ssupdatedd{the} goal \ssreview{is} the same as \cite[IV.B]{Siriya2020Safety-GuaranteedWaves}, \ssreview{differing} from FaSTrack \cite{Chen2021FaSTrack:Tracking}. \ssreview{This enables the capturing of} spatial information in the wave model.

\paragraph{Case 2 and 3: Time-Varying/Invariant Wave Approximation} \label{sec:planning-example-case-2}
Suppose Alg. \ref{alg:offline-stage} has been run for Case 2 or 3 based on \ref{sec:case-study-offline-computation-tv-wave}. Unlike in Case 1, $\mathcal{O}_p(t_k)$ can be simply formulated as $\mathcal{O}_p(t_k) = \mathbb{O} \oplus (-\mathcal{B}_e^{\text{approx}}(t_k;c))$ for $k \in \{0,\hdots,N\}$, where $\mathcal{B}_e^{\text{approx}}(t_k;c) := \{(x_{\alpha},z_{\alpha}) : V((x_{\alpha},z_{\alpha},u_r,w_r),t) \leq c \text{ for some } u_r,w_r \in \mathbb{R} \}$ is the TEB for the approximate wave model. $\mathcal{B}_e^{\text{approx}}(t;c)$ can be computed using ToolboxLS \cite{Mitchell2005ASystems}, and further approximated with the minimum bounding ball for convenience. Its implementation is simpler than Case 1 since $\mathcal{B}_e^{\text{approx}}(t;c)$ is independent of $x,z$. The planning system goal is also easily obtained via $\mathcal{G}_p(t_k)= \mathbb{G} \ominus \mathcal{B}_e^{\text{approx}}(t_k;c)$, where $\ominus$ denotes the Minkowski difference. In these examples, the computations of the planning system obstacles and goals are the same as \cite[IV.A]{Siriya2020Safety-GuaranteedWaves}, and coincide with FaSTrack \cite{Chen2021FaSTrack:Tracking}. A trajectory that avoids planning system obstacles and enters the goal region is then computed in line 4 by solving the MPC problem in \eqref{eqn:mpc-problem} using the new $\mathcal{O}_p(t_k)$ and $\mathcal{G}_p(t_k)$.

\subsubsection{Simulating Planning Algorithms} \label{sec:case-study-planning-simulation}
This section aims to briefly explain the difference in performance between the planned trajectories across Cases 1, 2, and 3. We qualitatively describe these differences based on the numerical example in our conference paper \cite[\ssminor{Sec.} V]{Siriya2020Safety-GuaranteedWaves}, which we refer the reader to for full details. Note that Cases 1, 2, and 3 in this paper correspond to Cases 3, 2, and 1 (reversed order) in \cite{Siriya2020Safety-GuaranteedWaves}.
\ssreview{We also briefly discuss the importance of choosing an appropriate case for real-time implementation.}

Our findings were that the planned trajectory for Case 1 reaches the planning system goal region in the shortest time, followed by Case 2, and lastly Case 3. This order is explained by the fact that as we move from Case 1 to Case 2, spatial information for $\underbar{d}^{\text{tvsd}}(\ssreview{t,s})$ is moved to the unmodelled uncertainty in $d^{\text{tvsi}}$, and both time \textit{and} spatial information is captured in $d^{\text{tisi}}$ for Case 3. The greater the magnitude of the unmodelled uncertainty, the larger the planning system obstacles and the smaller the goal, producing conservative trajectories and slower goal-reaching. These results highlight the advantages of our time-varying method, compared to \ssupdated{naive}ly extending FaSTrack by uniformly bounding the disturbances.

Despite the benefit of using \ssreview{Case 1, there are computational issues to consider. Firstly, OfflineStage in Alg.~\ref{alg:offline-stage} is more computationally expensive} due to the curse of dimensionality in dynamic programming\ssreview{, since it involves a 6D model, versus Cases 2 and 3 which are 4D. \ssupdatedd{For} OfflineStage to be tractable in Case 1, a coarser discretization is required, decreasing the accuracy of the approximated value function. Moreover, the generation of planning system obstacles also becomes more complex and expensive (see Alg.~\ref{alg:computing-planner-obstacles} versus Sec.~\ref{sec:planning-example-case-2}), which is important because this computation is performed online. Thus, it is critical to carefully consider the choice of model with the view to attain real-time performance.} \ssreview{This is further discussed in \cite{Siriya2020Safety-GuaranteedWaves} which focuses on the planning problem.}

\subsection{Replanning and Tracking for the AUV} \label{sec:case-study-replanning}

In this section, we \ssminor{showcase} an implementation of Alg. \ref{alg:finite-interval-method} on a simulation of the \ssminor{plane-progressive wave} AUV \ssminor{model obtained by combining} 
\eqref{eqn:auv-dynamics} \ssminor{with \eqref{eqn:wave-flow-velocity}}.
\ssreview{The AUV model parameters we consider \ssupdated{are obtained from \cite{Proctor2014Semi-AutonomousROV}, and are $m=116$ kg, $Z_{\dot{w}}=-383$ kg, $X_{{\lvert u \rvert}u}=241.3$ kg/m, $\bar{m}=116.2 kg$, $X_{u}=26.9$ kg/s, $Z_{{\lvert w \rvert} w}=265.6 kg/m$, $X_{\dot{u}}=-167.7$ kg, $Z_{w}=0$ kg/s, $m=116$ kg, $Z_{\dot{w}}=-383$ kg, $X_{{\lvert u \rvert}u}=241.3$ kg/m, $\bar{m}=116.2$ kg, $X_{u}=26.9$ kg/s, $Z_{{\lvert w \rvert} w}=265.6$ kg/m, $X_{\dot{u}}=-167.7$ kg, $Z_{w}=0$ kg/s. Moreover, the disturbances } $d^{\textnormal{nom}}=(d_x,d_z,d_u,d_w)$ are each obtained by applying zero-order hold with time interval $0.2 \ \textnormal{s}$ to discrete-time i.i.d. random sequences sampled from a \ssupdatedd{uniform} distribution over $[-0.001,0.001]$.}
The plane-progressive wave parameters for \eqref{eqn:wave-flow-velocity} we consider are \ssupdated{$\tilde{A}=0.4$ m, $\omega=2 \pi (0.1)$ rad/s, and $k=0.0402$ rad/m.}

\ssupdatedd{We aim} to control the robot so that starting from $s_0=(-1.4,2.74,0.0,0.0)$, it reaches the goal region $\mathcal{G}$ whilst avoiding all obstacles in an \textit{a priori} unknown static environment over the region $\Xi = \{(x,z) \mid x \in [-2,2], z \in [2,6] \}$. We consider $T_{\text{run}}=8$ seconds, and $T_{\text{off}} = 10$ seconds. In this task, \ssreview{we consider an online constraint $\mathcal{C}(t)$ (described in Sec.~\ref{sec:online-constraints}) representing} known information about the true obstacles at time $t$. We assume that the AUV is equipped with a sensor with a range of \ssupdated{$1.2 \ \textnormal{m}$}, and whenever an obstacle partially enters this sensor range at time $t$, the entirety of the obstacle has been sensed, so $\mathcal{C}(t)$ is updated to include this new obstacle information. Additionally, $\Xi^{\mathsf{c}}$ is considered as a known obstacle for all $t$. Moreover, for the HJ game, we assume the \ssupdated{tracking/planning system inputs and disturbances are constrained so $\vert T_A \vert, \vert T_B \vert \leq 1000$ N, $\vert u_{p,x} \vert, \vert u_{p,z} \vert \leq 0.3$ m/s, $\vert d_x \vert, \vert d_z \vert \leq 0.001$ m/s, and $\vert d_u \vert, \vert d_w \vert \leq 0.001$ m/$\textnormal{s}^2$ are all satisfied.} \ssreview{For this example, we consider the Case~2 tracking system model from \ref{eqn:case-2-model-procedure} for Alg.~\ref{alg:finite-interval-method}, whose \ssupdated{wave model} parameters were chosen \ssupdated{as $\tilde{A}^{\textnormal{tvsi}}_W=0.2185$ m/s, $\tilde{A}^{\textnormal{tvsi}}_A=0.1373$ m/$\textnormal{s}^2$, $\phi_W=\phi_A=0$ rad, $D_W=0.03$ m/s, and $D_A=0.025$ m/$\textnormal{s}^2$,} since they produce the smallest $\tilde{A}^{\text{tvsi}}_{W},\tilde{A}^{\text{tvsi}}_{A},D_W^{\text{tvsi}},D_A^{\text{tvsi}}>0$ satisfying \eqref{eqn:case-study-offline-computation-tv-bound}.} Hence, the inputs for OfflineStage in line 2, as well as the implementation of PlanTrajectoryOverInterval in line 15, both correspond to Case 2, as described in \ref{sec:case-study-offline-computation-tv-wave} in \ref{sec:planning-example-case-2}. \ssreview{Note that although Case~1 is also compatible, we stick to Case~2 in this section to focus our discussion on the ability of Alg.~\ref{alg:finite-interval-method} to handle time-varying systems, rather than the benefit of including spatial information as previously discussed in Sec.~\ref{sec:case-study-planning-simulation}.}

In Alg. \ref{alg:finite-interval-method}, we have the flexibility of choosing when to replan in line 11, and also the value level $c_k$ and planning system state $p_k$ according to lines 14 and 15. We repeat the simulation twice under two different implementations of these lines. In Simulation 1, we replan if an unknown obstacle enters the sensor range of the AUV, and fix $c_k = \underbar{V}(s_0,0)=0.61$, and $p_k=\hat{\xi}_{h,k-1}(t_{k})$. In Simulation 2, in addition to replanning when an unknown obstacle enters the sensor range, we replan every $2$ seconds \ssreview{so that the teleportation phenomenon described in Remark~\ref{remark:replan-level} occurs more frequently. Additionally, we replan} when the AUV enters the region defined by \ssupdated{$z > 3.6 \ \textnormal{m}$}. Moreover, we initialize $p_k$ at the closest point in $\mathcal{T}_p(s_k,t_k;c_k)$ to $\mathcal{G}_p(t_k)$. Initially, we set $c_k = \underbar{V}(s_0,0) = 0.61$, but change it to $c_k=\ssminor{0.7}$ when the AUV enters this region. The reason for this is to increase the size of the region $\mathcal{T}_p(s_k,t_k;c_k)$ over which $p_k$ is reinitialized \ssreview{so that the effects of teleportation are further accentuated.}

\begin{figure}[h]
    \centering
    \includegraphics[width=0.4\textwidth]{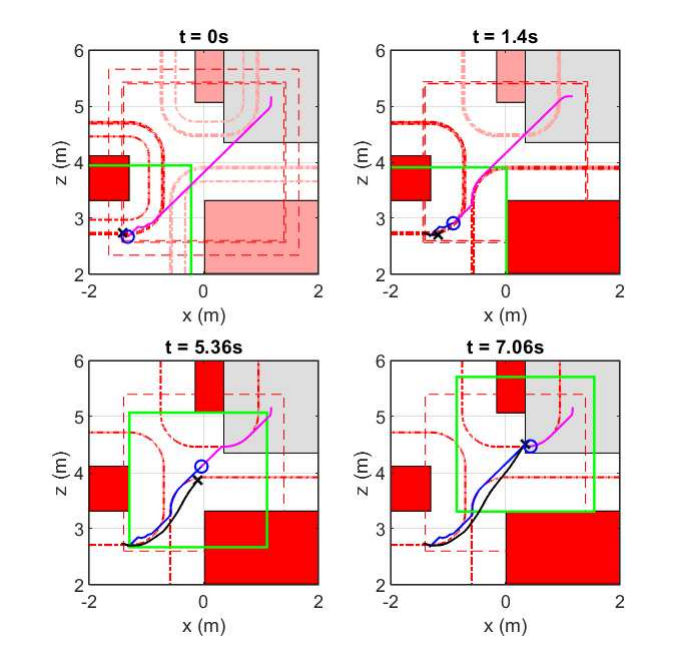}
    \caption{Snapshots of the tracking and planning systems at time instances where replanning occurs up until the goal is reached, for Simulation 1. Dark red blocks correspond to known obstacles, and the corresponding dark red dotted lines correspond to the planning system obstacles. The dark red dashed squares are due to $\Xi^{\mathsf{c}}$, which is included in the known obstacles. Light red blocks correspond to unknown obstacles, and the corresponding light red dotted lines are the unknown future planning system obstacles. The grey box is the goal region. The black `x' is the tracking system position, and the black line is the tracking system trajectory up to the current time. The green box is the sensor range for the tracking system. The blue circle is the planning system, the blue line is the planning system trajectory up to the current time, and the purple line is the current planned trajectory.}
    \label{fig:case_study_tv_noswitch}
\end{figure}

Snapshots of the trajectory for Simulations 1 and 2 and at various time instances where replanning occurs are shown in Figures \ref{fig:case_study_tv_noswitch} and \ref{fig:case_study_tv_switch} respectively. In Simulation 1, replanning due to obstacle detection occurs at $t=0,1.4$ and $5.36$ seconds, and the goal is reached at $7.06$ seconds. In Simulation 2, replanning due to obstacle detection occurs at $t=0,1.1$ and $3.8$ seconds, and the goal is reached at $4.2$ seconds. Replanning due to $z > 3.6$ occurs at $t=3.8$ seconds.

\begin{figure}[h]
    \centering
    \includegraphics[width=0.4\textwidth]{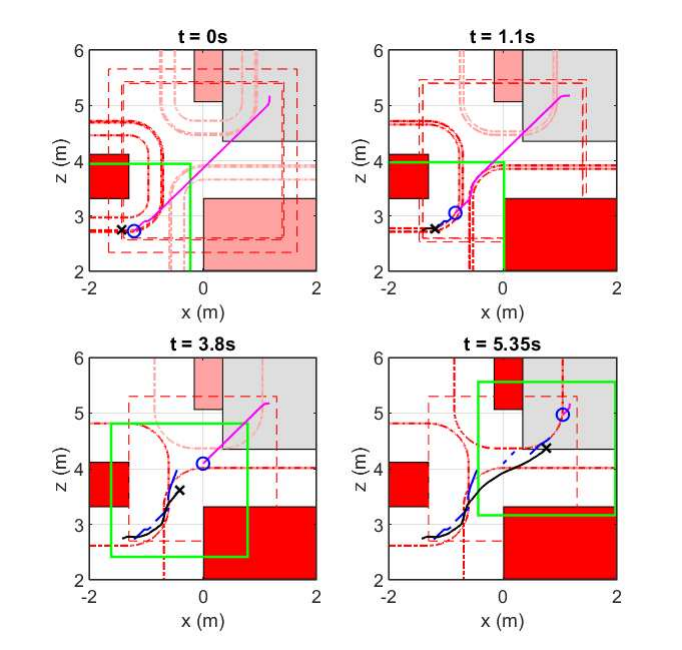}
    \caption{Snapshots of the tracking and planning system at times where replanning occurs up until the goal is reached for Simulation 2. The meaning of symbols and lines are consistent with Fig. \ref{fig:case_study_tv_noswitch}.}
    \label{fig:case_study_tv_switch}
\end{figure}
In both cases, we observe that whenever an obstacle comes into the sensor range for the tracking system such that $\mathcal{C}(t)$ changes, a new trajectory is replanned which avoids the corresponding planning system obstacle. This enables the tracking system to successfully avoid all obstacles and reach the goal region, despite the environment being \textit{a priori} unknown. Moreover, the tracking system in Simulation 2 arrives at the goal region earlier than \ssupdatedd{in} Simulation 1. This can be attributed to the differences in the implementation of the replanning lines. In particular, due to \ssupdatedd{the} selection of a larger $c_k$ when the AUV satisfies \ssupdated{$z > 3.6 \ \textnormal{m}$} and the initialization of $p_k \in \mathcal{T}_p(s_k,t_k;c_k)$, the planning system \ssupdatedd{can} ``teleport'' in Simulation 2. This phenomenon can be seen via the discontinuities of the planning system trajectory in Fig. \ref{fig:case_study_tv_switch}. It effectively enables the planning system to traverse through the environment beyond the speed specified when precomputing the $V(r,t)$ and $u_s^*(r,t)$, encouraging the tracking system to catch up and therefore result in faster goal-reaching. This is useful, but note that $c_k$ can only be increased if there is enough space for the AUV to reach the goal. This is evident at $t=3.8$ seconds in Fig. \ref{fig:case_study_tv_switch}, where after increasing $c_k$, the planning system obstacles enlarge, leaving no gap for the planning system to pass around the bottom two obstacles if it had not already. 

These observations highlight a tradeoff between the degree of ``teleportation'' and the size of the planning system obstacles that can be exploited for desired performance. It allows our replanning mechanism to play a role \textit{similar} to ``meta-planning" for FaSTrack \cite{Fridovich-Keil2018PlanningPlanning}, which enables safe switching between fast planners for open environments, and slow planners for tight ones. Our method can obtain similar \ssupdatedd{behavior} to a fast planner via the selection of a higher $c_k$, and a slow planner via the selection of a lower $c_k$, with the benefit of only requiring \ssupdatedd{the} precomputation of a \textit{single} value function.

\ssreview{To obtain a better understanding of how conservative our method is, we also plotted the value level $c_k$ versus the evaluation of the value function on the relative state in Fig.~\ref{fig:safety-plot}. This plot illustrates how close the relative system trajectory is to breaching the sublevel set of $V(r,t)$ corresponding to $c_k$, which would cause the violation of safety guarantees. We see that the system comes close to violating the safety guarantees during the times when $c_k=0.61$, suggesting that our method is not too conservative for the considered problem. Although we pointed out in Sec.~\ref{sec:hj-game} that the consideration of worst-case planning system \ssupdatedd{behavior} is usually conservative for most problems, we find it is reasonable in this example. It can take place when the tracking system $f$ is catching up to a planning system $h$ fighting against the waves in \eqref{eqn:wave-flow-velocity}, which can occur frequently due to the periodic nature of the waves.}

\begin{figure}
    \centering
    \includegraphics[width=0.45\textwidth]{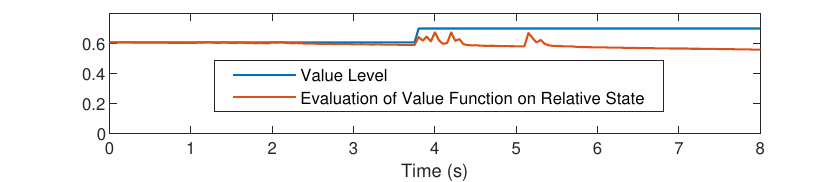}
    \caption{Comparison plot of value level $c_k$ versus the evaluation of the value function on the relative state for Simulation 2.}
    \label{fig:safety-plot}
\end{figure}

\begin{remark} \label{remark:stress-test}
    \ssreview{Fig.~\ref{fig:safety-plot} suggests our method in Simulation~2 is reasonable, but usually, considering worst-case disturbances/planning is conservative in applications. To get a feel for this, one could gradually increase the size of disturbances and planning system inputs in simulations while plotting the value level $c_k$ versus the evaluation of the value function (like in Fig.~\ref{fig:safety-plot}), until the latter breaches the former.}
\end{remark}

We also tried to simulate replanning and tracking when OfflineStage and PlanTrajectoryOverInterval are implemented according to Case 3, where the tracking system is time-invariant like FaSTrack. 
\ssreview{Following the modeling procedure in \ref{eqn:case-2-model-procedure}, the Case~3-specific parameters were selected as $D_W=0.2319$ and $D_A=0.1457$.}
However, the task was \ssreview{unsuccessful}. Even with the smallest value level $c_k = \underbar{V}(s_0,0)=0.67$, the planning system obstacles were too large, rendering planning infeasible. \ssupdated{This is consistent with the findings in Sec.~\ref{sec:case-study-planning-simulation}, further highlighting}  the benefits of incorporating time-varying information inside the tracking system dynamics and using Alg. \ref{alg:finite-interval-method}, as opposed to treating it as a bounded disturbance and modeling the tracking system as time-invariant.

\ssreview{The planning procedure in \eqref{eqn:mpc-problem} was solved using IPOPT \cite{wachter2006implementation} on a laptop with a Core I7-8565U CPU, with a discretization interval of $0.2$ s. In Simulation~2, the planning horizon varied from $N=104$ at $t=0$ to $60$ over the simulation based on the length of the planning interval $[t,T_{\textnormal{run}}]$ at the current time $t$. Moreover, the average solve time \ssupdated{for Simulation 1 was $1.0$ s (averaged over 4 solves, with a standard deviation of $0.4$ s), and for Simulation 2 was $1.4$ s (averaged over 9 solves, with a standard deviation of $0.9$ s).}
\ssupdated{These numbers cannot} yet be considered real-time but may be improved with faster hardware. \ssupdated{They can also be} improved by removing the goal constraint $p_N \in \mathcal{G}_p(t_N)$ in \eqref{eqn:mpc-problem} and shortening the planning interval in line 15 of Alg.~\ref{alg:finite-interval-method}, as discussed in Remark~\ref{remark:conditions}, and additionally applying different tricks for more efficiently solving optimization problems such as warm-starting.}

\begin{remark}
    \ssreview{One interesting question is how Alg.~\ref{alg:finite-interval-method} compares with FaSTrack \cite{Chen2021FaSTrack:Tracking} on a time-invariant system example. Note that the primary difference between Simulations~1 and 2 is that the former can ``teleport'', which was pointed out in Remark~\ref{remark:choose-replan} as the significant difference between Alg.~\ref{alg:finite-interval-method} and FaSTrack (outside of the former supporting time-varying systems). Thus, we expect that implementing Alg.~\ref{alg:finite-interval-method} similarly to Simulation~2 for a time-invariant system should produce faster \ssupdatedd{goal-reaching} compared to FaSTrack.}
\end{remark}

\subsection{Periodic Replanning and Tracking for the AUV} \label{sec:case-study-replanning-periodic}

In this section, we demonstrate an implementation of Alg. \ref{alg:periodic-method} on a simulation of the same AUV subject to plane-progressive waves as in \ssminor{Sec.} \ref{sec:case-study-planning} and \ref{sec:case-study-replanning}. We aim to control the robot starting from $s_0=(-3.3,2.7,0.0,0.0)$ so that it avoids all obstacles in the environment, which are \textit{a priori} unknown, and reaches successive goal regions $\mathcal{G}_1$, $\mathcal{G}_2$, and $\mathcal{G}_3$, which are illustrated in Fig. \ref{fig:case_study_periodic}. In particular, as soon as the AUV hits $\mathcal{G}_1$, the active goal is changed to $\mathcal{G}_2$, and then when it hits $\mathcal{G}_2$, the active goal becomes $\mathcal{G}_3$. Unlike the previous example, we enlarge the environment to the region $\{(x,z) \in \mathbb{R}^2 \mid x \in [-4,4], z \in [2,6] \}$. To accommodate this large environment and varying \ssupdatedd{goals}, we allow $T_{\text{run}}=60$ seconds, set $T_{\text{off}} = 14$ seconds, and $T'=4$ seconds. Notice that $T_{\text{run}} \gg T_{\text{off}}$. The sensor range of the AUV is \ssupdated{$1.2 \ \textnormal{m}$}, and again, the online, updated obstacles $\mathcal{O}(t)$ represents known obstacle information at time $t$. The inputs for OfflineStage in line 2, as well as the implementation of PlanTrajectoryOverInterval in line 15, both correspond to Case 2, as described in \ref{sec:case-study-offline-computation-tv-wave} in \ref{sec:planning-example-case-2} respectively.

In addition to replanning when $t - t_k \geq T'$, we choose to replan whenever a new obstacle enters the sensor range of the AUV, as well as the moment it hits the current active goal region. We also fix $c_k = 0.71$ and set $p_k$ as the closest point in $\mathcal{T}_p(s_k,t_k;c_k)$ to $\mathcal{G}_p(t_k)$ which avoids $\mathcal{O}_p(t_k)$. Note that we relax the planning algorithm so that C4 is not enforced in Alg. \ref{alg:planning-bounded-interval} by removing the MPC goal constraint in \eqref{eqn:mpc-problem} since the planning horizon is too small to plan a trajectory that reaches the planning system goal region.

Snapshots of the simulation at time instances where the active goal region has been reached \ssupdatedd{are} shown in Fig. \ref{fig:case_study_periodic}. The AUV successfully avoids all obstacles and reaches all of the active goal regions in succession, despite the environment being \textit{a priori} unknown. Additionally, the algorithm is successfully applied up until the final active goal is reached at $t=52.25$ seconds, which is significantly longer than $T_{\text{off}}=14$ seconds. This is achieved since Alg. \ref{alg:periodic-method} exploits the periodic properties of the system.

\begin{figure}
    \centering
    \includegraphics[width=0.49\textwidth]{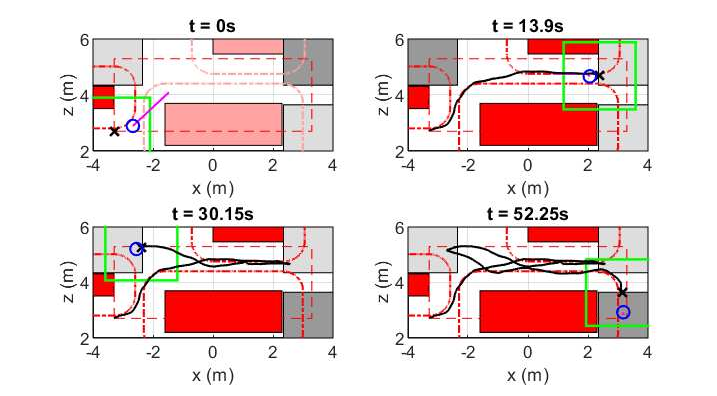}
    \caption{Snapshots of the tracking and planning systems at time instances where the replanning occurs due to the AUV reaching the active goal region. The meaning of symbols and lines are consistent with Fig. \ref{fig:case_study_tv_noswitch}, with the difference being that the dark grey box is the current active goal region, and the light grey boxes are the inactive goals.}
    \label{fig:case_study_periodic}
\end{figure}
\section{Conclusion} \label{sec:conclusion}

In this paper, we proposed a method for safety-guaranteed trajectory replanning and tracking for generally time-varying systems while satisfying constraints updated online over a finite horizon. Our procedure \ssupdated{is promising for real-time applications and} allows planned trajectories to `teleport', enabling \ssminor{flexible} performance across diverse environments. This was modified to obtain an algorithm \ssminor{applicable} over arbitrarily long time interval\ssminor{s} for periodic systems. We \ssupdatedd{analyze} the constraint satisfaction and \ssupdatedd{goal-reaching} guarantees for these methods and demonstrate \ssupdatedd{their} utility on \ssminor{an example} AUV subject to wave disturbances. We show\ssminor{ed} that \ssupdatedd{modeling} such systems as time-varying and using our replanning procedure \ssminor{allows} the planning of less conservative safety-guaranteed trajectories for faster goal-reaching. We demonstrate\ssminor{d} that in our replanning method, planning system \ssminor{teleportation} can be exploited to trade off between faster \ssupdatedd{goal-reaching} versus feasible navigation through tight environments. Finally, we demonstrate\ssminor{d} that our replanning procedure can be used to safely control \ssminor{the AUV} over arbitrarily long time intervals.

\ssreview{Although real-time performance is aided by the offline computation of the optimal tracking controller and the support for simple planning models, further work is required for concretely implementing online planning in Alg.~\ref{alg:planning-bounded-interval} efficiently in specific applications, as it is the main computational bottleneck. Using the AUV case study as an example, it would be interesting to analyze the time complexity of the Case 1 obstacle computation procedure in Alg.~\ref{alg:computing-planner-obstacles} or determine a more efficient procedure. Moreover, it would be  useful to improve the solver for the trajectory optimization problem. We leave these considerations to future work.}

\bibliography{references.bib, biburls.bib, extra-references.bib}
\bibliographystyle{ieeetr.bst}

\appendices
\section{Proof of Theoretical Results} \label{sec:proofs}

\begin{proof}[Proof of Prop.~\ref{prop:remain-in-teb}]
Suppose $t' \in [t,T_{\text{off}}]$. Then, \ssreview{$V(r,t)= \max_{\tau \in [t,T_{\text{off}}]} l(\xi_g^*(\tau;r,t))= \max \Big\{ \max_{\tau \in [t,t']} l(\xi_g^*(\tau;r,t)),\max_{\tau \in [t',T_{\text{off}}]} l(\xi_g^*(\tau; \xi_g^*(t';r,t),t')) \Big\}$ $\geq \max_{\tau \in [t',T_{\text{off}}]} l(\xi_g^*(\tau; \xi_g^*(t';r,t),t'))= V(\xi_g^*(t';r,t),t')$.}
\end{proof}

\begin{proof}[Proof of Lem.~\ref{lemma:obstacle-avoidance}]
Suppose $\xi_h^*(t';p,t) \in \mathcal{F}_B(t';c)$ holds. Since $p \in \mathcal{T}_p(s,t_i;c)$, $V(Ls-Mp,t) \leq c$ holds. Letting $r = Ls-Mp$, it follows from Prop. \ref{prop:remain-in-teb} that for all $t' \in [t,T_{\text{off}}]$, $V(\xi_g^*(t';r,t),t') \leq c$ holds. This can be equivalently expressed as
\begin{align}
    & V(\xi_g^*(t';r,t),t') \leq c \\
    \iff & V(L\xi_f^*(t';s,t) - M\xi_h^*(t';p,t),t') \leq c \\
    \iff & \xi_f^*(t';s,t) \in \{ s' : V(Ls'-M\xi_h^*(t';p,t),t') \leq c \} \\
    \iff & \xi_f^*(t';s,t) \in \mathcal{T}_s(t',\xi_h^*(t';p,t);c), \label{eqn:planning-theorem-remain-tube}
\end{align}
where $\mathcal{T}_s$ is defined as \begin{equation}
    \mathcal{T}_s(p,t;c) := \{s : V(Ls-Mp,t) \leq c \} \subseteq \mathcal{S} \label{def:tracking-system-tube}
\end{equation}
and is nonempty since $c \geq \underbar{V}(s,t)$.

We now aim to prove $\mathcal{T}_s(\xi_h^*(t';s,t),t';c) \subseteq B$. Suppose $s' \in \mathcal{T}_s(\xi_h^*(t';s',t),t';c)$, such that $V(Ls'-M\xi_h^*(t';p,t),t') \leq c$. We proceed by contradiction to prove that $s' \in B$. In particular, assume $s' \in B^{\mathsf{c}}$. Since $\xi_h^*(t';p,t) \in \mathcal{F}_B(t';c)$, then $V(Ls'-M\xi_h^*(t';p,t)) > c$ follows, which contradicts $V(Ls'-M\xi_h^*(t';p,t),t') \leq c$, and thus $s' \in B$ holds. Therefore, $\mathcal{T}_s(\xi_h^*(t';p,t),t';c) \subseteq B$ holds.

Since $ \xi_f^*(t';s,t) \in \mathcal{T}_s(\xi_h^*(t';p,t),t';c)$, and $\mathcal{T}_s(\xi_h^*(t';p,t),t';c) \subseteq B$, the conclusion follows.
\end{proof}

\begin{proof}[Proof of Thm.~\ref{theorem:planning-theorem}]
Suppose $t \in [t_i,t_f]$, and ${\xi}_h^*(t;p,t_i) \in \mathcal{F}_{\mathcal{C}}(t;c)$. Since \eqref{eqn:value-function-level-condition/initial-planning-system-condition} and \eqref{eqn:planning-obstacle-avoidance} hold, applying Lem. \ref{lemma:obstacle-avoidance}, it follows that ${\xi}_f^*(t;p,t_i) \in \mathcal{C}$, establishing \eqref{eqn:planning-obstacle-avoidance}. Similarly, $\xi_h^*(t;p,t_i) \in \mathcal{G}_p(t;c)$ implies ${\xi}_f^*(t;s,t_i)\in \mathcal{G}$ via Lem. \ref{lemma:obstacle-avoidance}.
\end{proof}

\begin{proof}[Proof of Cor.~\ref{cor:replanning-bounded-interval}]
Suppose $k \in \{0, \hdots, K-1\}$, and $t \in [t_k, t_{k+1})$. 

We first prove \eqref{eqn:replanning-obstacle-avoidance}. Since $\xi_h^R(t) \in \mathcal{F}_{\mathcal{C}(t_k)}(t;c_k)$, then by definition in \eqref{eqn:replanned-trajectory} and the fact that $s_k = \xi_f^R(t_k)$ (following induction), we know that $\xi_h^*(t; p_k, \xi_f^R(t_k), t_k) = \xi_h^*(t; p_k, s_k, t_k) \in \mathcal{F}_{\mathcal{C}(t_k)}(t;c_k)$. Moreover, since $c_k \geq \underbar{V}(\xi_f^R(t_k) ,t_k)$ and $p_k \in \mathcal{T}_p(\xi_f^R(t_k),t_k;c_k)$, following \eqref{eqn:planning-obstacle-avoidance} in Thm. \ref{theorem:planning-theorem}, $\xi_f^*(t; \xi_f^R(t_k), p_k, t_k) \in \mathcal{C}(t_k)=\mathcal{C}(t)$ holds for $t \in [t_k,t_{k+1})$. Finally, since
\begin{align}
    \xi_f^R(t) &= \xi_f(t; s_0, 0, u_s^R(\cdot), d^R(\cdot)) \\
    &= \xi_f(t; \xi_f^R(t_k), t_k, u_s^R(\cdot),d^R(\cdot))\\
    &= \xi_f(t; \xi_f^R(t_k), t_k, u_s^*(\cdot; L\xi_f^R(t_k) - Mp_k,t_k), \\
    &\qquad d^*(\cdot; L\xi_f^R(t_k) - Mp_k, t_k)) \\
    &=\xi_f^*(t; \xi_f^R(t_k), p_k, t_k)
\end{align}
it follows that $\xi_f^R(t) \in \mathcal{C}(t)$.

We now prove \eqref{eqn:replanning-goal-reaching} in a similar manner. Since $\xi_h^R(t) \in \mathcal{G}(t;c_k)$, then by definition in \eqref{eqn:replanned-trajectory} and $s_k = \xi_f^R(t_k)$ we find $\xi_h^*(t; p_k, \xi_f^R(t_k), t_k)=\xi_h^*(t; p_k, s_k, t_k) \in \mathcal{G}_p(t; c_k)$. Moreover, since $c_k \geq \underbar{V}(\xi_f^R(t_k) ,t_k)$ and $p_k \in \mathcal{T}_p(\xi_f^R(t_k),t_k;c_k)$, following \eqref{eqn:planning-goal-reaching} in Thm. \ref{theorem:planning-theorem}, $\xi_f^R(t) = \xi_f^*(t; \xi_f^R(t_k), p_k, t_k) \in \mathcal{G}$.
\end{proof}

\begin{proof}[Proof of Cor.~\ref{cor:replanning-periodic}]
Suppose $k \in \{0, \hdots, K-1\}$, and $t \in [t_k, t_{k+1})$.

We first prove \eqref{eqn:replanning-obstacle-avoidance-periodic}. Since $\xi_h^R(t) \in \mathcal{F}_{\mathcal{C}(t_k)}(\mathcal{M}^{ \text{f} }_{\tau}(t_k,t);c_k)$, then by definition in \eqref{eqn:replanned-trajectory-periodic} and the fact that $s_k = \xi_f^R(t_k)$ (following induction) we know that $\xi_h^*(\mathcal{M}^{ \text{f} }_{\tau}(t_k,t); p_k, \xi_f^R(t_k), \mathcal{M}^{\text{i}}_{\tau}(t_k,t))=\xi_h^*(\mathcal{M}^{ \text{f} }_{\tau}(t_k,t); p_k, s_k, \mathcal{M}^{\text{i}}_{\tau}(t_k,t))$ $ \in \mathcal{F}_{\mathcal{C}(t_k)}(\mathcal{M}^{ \text{f} }_{\tau}(t_k,t);c_k)$. Moreover, since $c_k \geq \underbar{V}(\xi_f^R(t_k) , \mathcal{M}_{\tau}^i(t_k,t_{k+1})$ and $p_k \in \mathcal{T}_p(\xi_f^R(t_k),\mathcal{M}^{\text{i}}_{\tau}(t_k,t_{k+1});c_k)$, following \eqref{eqn:planning-obstacle-avoidance} in Thm. \ref{theorem:planning-theorem}, $\xi_f^*(\mathcal{M}^{ \text{f} }_{\tau}(t_k,t); \xi_f^R(t_k), p_k, \mathcal{M}^{\text{i}}_{\tau}(t_k,t))$ $  \in \mathcal{C}(t_k)=\mathcal{C}(t)$ holds for $t \in [t_k,t_{k+1})$, where Thm. \ref{theorem:planning-theorem} can be applied since \eqref{eqn:successfully-mapped-offline-interval} implies $\mathcal{M}_{\tau}([t_k,t_{k+1}]) \subseteq [0,T_{\text{off}}]$. Finally, since
\begin{align}
    \xi_f^R(t) &= \xi_f(t; s_0, 0, u_s^R(\cdot), d^R(\cdot)) \\
    &= \xi_f(t; \xi_f^R(t_k), t_k, u_s^R(\cdot),d^R(\cdot))\\
    &= \xi_f(t; \xi_f^R(t_k), t_k, \\
    &\quad u_s^*(\mathcal{M}^{ \text{f} }_{\tau}(t_k,(\cdot)); L\xi_f^R(t_k) - Mp_k,\mathcal{M}^{\text{i}}_{\tau}(t_k,(\cdot))), \\
    &\quad d^*( \mathcal{M}^{ \text{f} }_{\tau}(t_k,(\cdot)); L\xi_f^R(t_k) - Mp_k, \mathcal{M}^{\text{i}}_{\tau}(t_k,(\cdot))))) \\
    &=\xi_f^*(\mathcal{M}^{ \text{f} }_{\tau}(t_k,t); \xi_f^R(t_k), p_k, \mathcal{M}^{\text{i}}_{\tau}(t_k,t)),
\end{align}
it follows that $\xi_f^R(t) \in \mathcal{C}(t)$.

We now prove \eqref{eqn:replanning-goal-reaching-periodic} in a similar manner. Since $\xi_h^R(\mathcal{M}^{ \text{f} }_{\tau}(t_k,t)) \in \mathcal{G}(\mathcal{M}^{ \text{f} }_{\tau}(t_k,t);c_k)$, then by definition in \eqref{eqn:replanned-trajectory-periodic} and $s_k = \xi_f^R(t_k)$ we know that $\xi_h^*(\mathcal{M}^{ \text{f} }_{\tau}(t_k,t); p_k, \xi_f^R(t_k), \mathcal{M}^{\text{i}}_{\tau}(t_k,t))=\xi_h^*(\mathcal{M}^{ \text{f} }_{\tau}(t_k,t); p_k, s_k, \mathcal{M}^{\text{i}}_{\tau}(t_k,t)) \in \mathcal{G}_p(\mathcal{M}^{ \text{f} }_{\tau}(t_k,t); c_k)$. Moreover, since $c_k \geq \underbar{V}(\xi_f^R(t_k) ,\mathcal{M}^{\text{i}}_{\tau}(t_k,t))$ and $p_k \in \mathcal{T}_p(\xi_f^R(t_k),\mathcal{M}^{\text{i}}_{\tau}(t_k,t);c_k)$, following \eqref{eqn:planning-goal-reaching} in Thm. \ref{theorem:planning-theorem}, $\xi_f^R(t) = \xi_f^*(\mathcal{M}^{ \text{f} }_{\tau}(t_k,t); \xi_f^R(t_k), p_k, \mathcal{M}^{\text{i}}_{\tau}(t_k,t)) \in \mathcal{G}$ holds.
\end{proof}




\addtolength{\textheight}{-12cm}

\end{document}